\def\dim{\mathop{\mathrm{dim}}\nolimits}
\def\Im{\mathop{\mathrm{Im}}\nolimits}
\def\Re{\mathop{\mathrm{Re}}\nolimits}
\def\Span{\mathop{\mathrm{Span}}\nolimits}
\def\i{\mathrm{i}}
\newcommand{\mb}[1]{{\mathbf{#1}}}
\def\C{\mathbb{C}}
\def\R{\mathbb{R}}
\def\Z{\mathbb{Z}}
\def\CA{\mathcal{A}}
\def\CC{\mathcal{C}}
\def\CF{\mathcal{F}}
\def\x{\mathrm{x}}
\def\y{\mathrm{y}}
\def\ww{w}
\def\d{\partial}
\def\lap{\mathcal{L}}
\numberwithin{equation}{section}
\newtheorem{thm}[equation]{Theorem}
\newtheorem{exa}[equation]{Example}
\newtheorem{prop}[equation]{Proposition}
\newtheorem{cor}[equation]{Corollary}
\newtheorem{lem}[equation]{Lemma}
\newtheorem{rem}[equation]{Remark}
\newtheorem{algo}[equation]{Algorithm}
\numberwithin{table}{section}
\numberwithin{figure}{section}
\title{Graph recovery from graph wave equation}
\author{%
  Yuuya Takayama \\
 Nikon Corporation, Japan \\
 \texttt{yuuya.takayama@nikon.com} \\
}
\date{\empty}
\begin{document}
\maketitle

\begin{abstract}

We propose a method by which to recover an underlying graph from a set of multivariate wave signals that is discretely sampled from a solution of the graph wave equation.
Herein, the graph wave equation is defined with the graph Laplacian, and its solution is explicitly given as a mode expansion of the Laplacian eigenvalues and eigenfunctions.
For graph recovery, our idea is to extract modes corresponding to the square root of the eigenvalues from the discrete wave signals using the DMD method, and then to reconstruct the graph (Laplacian) from the eigenfunctions obtained as amplitudes of the modes.
Moreover, in order to estimate modes more precisely, we modify the DMD method under an assumption that only stationary modes exist, because graph wave functions always satisfy this assumption.

In conclusion, we demonstrate the proposed method on the wave signals over a path graph.
Since our graph recovery procedure can be applied to non-wave signals, we also check its performance on human joint sensor time-series data.

\end{abstract}

\section{Introduction}

Graphs have been researched over a century in mathematics and have recently been important in science and application fields, such as physics, chemistry, finance, signal processing, and machine learning.
Once a graph is given, mathematical studies can give powerful and useful results about its Laplacian eigenvalues, its connectivity, its hitting time, and so on.
However, for practical data, the graph structure is sometimes unknown.
Hence, it is a fundamental problem to reconstruct the underlying graph from a given dataset.
As typical examples of its solution, Gaussian graphical models (GGMs) are well known in machine learning; these models are derived from the assumption that data are generated from a multivariate Gaussian distribution \cite{uhler2017gaussian}.
In the present paper, alternatively, we assume that multivariate time-series data are generated from the \emph{graph wave equation} and then provide a procedure by which to recover the weighted graph structure over the variables from the observed dataset.
Moreover, based on this procedure, we propose an algorithm to construct a reasonable graph from any multivariate time-series data without assuming the data to follow the graph wave equation.

The graph wave equation was first studied by Friedman and Tillich
\cite{friedman2004wave} as a discrete analogue of the continuous wave equation.
Here, its general solutions, i.e. \emph{graph wave functions}, are given as
\begin{align*}
U(\x,t) = \sum_{k} \left\{ a_k \cos{\sqrt{\rho_k} t} + \frac{b_k}{\sqrt{\rho_k}} \sin{\sqrt{\rho_k} t} \right\}v_k(\x),
\end{align*}
by the eigenfunctions $\{v_k\}$ of the graph Laplacian, where $\x$ indicates a node of a graph.
In fact, a graph wave function for a path graph is viewed as a spatial discretization of a conventional wave function over an interval.
In the present paper, as a discrete multivariate time-series model, we use a set of observed values of a graph wave function at integer points $t\in\Z$, namely, $\{U(\x,t)\mid \forall \x, t \in \Z \}$.
Note that this discretization in the time direction does not lose any information of the waves if the Nyquist condition $\sqrt{\rho_k} < \pi$ is satisfied for any $k$.
When the underlying graph is unknown, so are the frequencies $\{\sqrt{\rho_k}\}$.
In contrast, it is easier to estimate the eigenfunctions $\{v_k\}$ when the frequencies are known because of the curve-fitting method, or in other words, the least squares method \cite{forsythe1957generation} \cite[\S 5]{li2013time}.
Hence, in order to recover the graph, it is necessary to estimate the frequencies from observed values of a graph wave function.

To this end, we use the \emph{dynamic mode decomposition} (DMD) proposed by Schmid \cite{schmid2010dynamic} and studied by Tu et al. \cite{tu2013dynamic}.
The essence of DMD is to describe the $(T+1)$-th observed set in terms of a linear sum of the previous $T$ observed sets
\begin{align*}
U(\x,T+1) = c_1U(\x,T) + c_2U(\x, T-1) + \cdots + c_TU(\x,1)
\end{align*}
by $c_j \in \R$.
Then, the modes $\{z_j\}$ are calculated as the solutions of a polynomial equation $Z^T - c_1Z^{T-1} - \cdots - c_{T-1}Z - c_T = 0$, just like \emph{Prony's method} \cite{prony1795essai}, but are shared in all (spatial) variables $\{U(\x, \cdot)\}$.
Moreover, the mode $z_j$ is immediately related to the frequency $\sqrt{\rho_k}$ as $z_j = e^{\pm\i\sqrt{\rho_k} t}$ for some $j$ and $k$.
Here, it is worth emphasizing that the mode set $\{z_j\}$ satisfies a stationary condition $|z_j| = 1$ and a conjugate condition $\bar{z}_j = z_{l}$ (for some $l$).
Hence, the coefficients $\{c_j\}$ satisfy a symmetric condition \cite{li1993asymptotic}.
In fact, this idea works well to calculate the frequencies robustly against numerical error from fewer observation points along the time axis.
As a result, this modified DMD method enables us to extract common frequencies more efficiently from multivariate time-series data, although other approaches may be applicable to the frequency estimation for the time-series data of each variable, such as \cite{li2013time, plonka2014prony, stampfer2020generalized}, 

From the calculated frequencies, we can estimate the graph Laplacian and the graph weights, in addition to the eigenfunctions $\{v_k\}$, as explained above.
To be more precise, the graph weights are given as
\begin{align*}
\ww_{\x\y} = -\sum_{k}v_k(\x)\rho_kv_k(\y).
\end{align*}
For any multivariate time-series data, we extend this form by replacing $\{v_k\}$ with estimated eigenfunction-like functions.
Here, the constructed frequency-based weights have the following remarkable properties:
\begin{itemize}
\item the weights are independent of the mean of each variable,
\item when two variables $U(\x,\cdot)$ and $U(\y,\cdot)$ have no common frequency, then $\ww_{\x\y}=0$, and
\item when two variables $U(\x,\cdot)$ and $U(\y,\cdot)$ are anti-synchronized, then $\ww_{\x\y}>0$.
\end{itemize}

In contrast to the frequency-based graph, Gaussian graphical models (GGMs) define graph weights by the correlation between two variables after conditioning on all other datasets \cite[\S 8]{uhler2017gaussian}
and are widely applicable to biology \cite{wang2016fastggm}, psychology \cite{epskamp2018gaussian}, and finance \cite{giudici2016graphical}.
Another perspective suggests defining distances for time series, such as dynamical time warping (DTW) \cite{berndt1994using} and feature based distances \cite{faloutsos1994fast, popivanov2002similarity}.
These distances define graph weights through distance kernels for the sake of various time-series analysis or machine learning tasks \cite{abanda2019review}. 
In particular, for the human joint time-series data we deal with, a graph structure is often constructed based on the Joint Relative Distance, defined by the pair-wise Euclidean distances between two 3D (or 2D) joint locations taken by a physical sensor.
Namely, graph weights are given by its variance \cite{tang2012retrieval}, its relative variance \cite{li2017graph}, or its DTW \cite{ahmed2015dtw}.
Compared with these studies, the proposed method has advantages in that it can directly extract information of temporal order, especially, the frequency of time-series data, and moreover requires a smaller amount of data.

The remainder of the present paper is organized as follows.
In \S \ref{sec:mode_extract}, we review the way to extract modes from equispaced sampled (multivariate) time-series data, known as Prony's method, and the DMD.
We also explain how to impose a stationary mode condition on these methods.
In \S \ref{sec:graphwave}, we first recall the definitions of the graph Laplacian and the graph wave equation.
Next, we see that a solution to the graph wave equation, a graph wave function, for a path graph naturally corresponds to a continuous wave over an interval, which is used for a demonstration in \S \ref{sec:experiment}.
Then, we solve the graph recovery problem for an observed dataset of graph wave functions using the modified DMD method.
This procedure is summarized and extended in \S \ref{sec:experiment}.
Finally, we determine its performance through three examples: graph wave signals over a path graph, discretely sampled continuous wave signals over an interval, and human joint sensor time-series data.

\section{Mode extraction techniques}
\label{sec:mode_extract}
In this section, we review mode extraction methods from the observed multivariate time-series dataset.
Then, we explain how to modify these methods for a set of modes satisfying special conditions in \S \ref{subsec:special_mode}.
In \S \ref{subsec:mode_expansion}, we also mention amplitude estimation methods, which are combined into mode expansion techniques.
Finally, we briefly explain examples of the mode expansion results in \S \ref{subsec:example}.

\subsection{Mode extraction problem setting}
Let $V$ be a finite set $\{ 1, 2, \cdots, n\}$ and let $\CA_V$ denote the set of functions $\{f\colon V \rightarrow \R \}$, which is an $\R$-vector space.
When $n=1$, $\CA_V$ is identified with $\R$.
We define two constant functions $0_V(\x) := 0$ and $1_V(\x) := 1$ for any $\x \in V$.

In addition, let $\{z_1, \cdots, z_{M}\}$ be roots of a polynomial equation
\begin{align}
Z^M - c_{1}Z^{M-1} - c_{2}Z^{M-2} - \cdots - c_{M-1}Z - c_M =0.
\label{eq:polynomial}
\end{align}
Needless to say, each coefficient is represented by $c_j = (-1)^{j+1}e_j(z_1,\cdots,z_{M})$, where $e_j$ denotes the $j$-th elementary symmetric polynomial.
We assume that the roots $\{z_j\}$ are distinct, and then define a time-series model $F$ as the time evolution of the roots
\begin{align}
F(\x,t) := \sum_{j=1}^{M}\alpha_j(\x)z_j^t, \text{ for }\x \in V, t \in \Z,
\label{eq:general_model}
\end{align}
where $\alpha_j\in \CA_V\otimes\C\setminus 0_V$.
We can define $F$ for $t \in \R$ by fixing $\arg z_j$, although this definition is not used herein.
In this context, we just refer to the roots $\{z_j\}$ and coefficients $\{\alpha_j\}$ as the modes and amplitudes, respectively, of $F$.
Now, we assume that the amplitudes satisfy a non-degenerate condition
\begin{align}
\dim_{\C}\Span[\alpha_1, \cdots, \alpha_M] = \min(M,n).
\label{eq:non-degenerate}
\end{align}
Note that this condition is always satisfied by taking subset $V' \subset V$ and considering all of the above over $V'$ if necessary.
Then, because of \eqref{eq:polynomial}, time-series model $F$ satisfies
\begin{align}
F(\x,t+M) - c_1F(\x,t+M-1) - \cdots - c_{M-1}F(\x,t+1)-c_MF(\x,t) = 0
\label{eq:dmd_eq}
\end{align}
for any $\x \in V$ and any $t \in \Z$.
This relation is regarded as a linear equation of the coefficients $\{c_j\}$.
Hence the modes $\{z_j\}$ of $F$ can be determined from temporal observed sets $\{F(\cdot,t)\}$.
\begin{prop}
\label{prop:prony}
Set $L=\max(M-n,0)+1$.
The mode set $\{z_j\}$ in the model $F$ \eqref{eq:general_model} is determined from observed values at $L+M$ integer points $\{F(\cdot,t) \in \CA_V\otimes\C \mid t = 1, 2, \cdots, L+M \}$ if condition \eqref{eq:non-degenerate} is satisfied.
\end{prop}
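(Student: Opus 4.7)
The plan is to recover the modes $\{z_j\}$ by first uniquely determining the recurrence coefficients $\{c_j\}$ from a linear system built out of the $L+M$ observations; the modes then follow as the roots of \eqref{eq:polynomial}. I would apply \eqref{eq:dmd_eq} at each admissible time shift $s \in \{1, \ldots, L\}$ and each $\x \in V$, which yields $nL$ linear equations in $(c_1, \ldots, c_M)$. Uniqueness of the $c_j$'s is then equivalent to the $nL \times M$ coefficient matrix having rank $M$.

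To analyze that rank, I would substitute the mode expansion \eqref{eq:general_model} and factor the coefficient matrix as $\tilde B \cdot W$, where $\tilde B$ has entries $\alpha_k(\x) z_k^s$ (rows indexed by $(\x, s)$, columns by $k$) and $W$ is an $M \times M$ Vandermonde-type factor with $W_{k,j} = z_k^{M-j}$. Because the $z_k$ are distinct (and can be assumed nonzero, as a zero mode contributes nothing at $t \geq 1$ and is unrecoverable), $W$ is invertible, so the task reduces to showing $\rank \tilde B = M$.

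The heart of the argument is a kernel analysis of $\tilde B$. I would suppose $\lambda \in \ker \tilde B$ has support $S = \{k : \lambda_k \neq 0\}$ of size $m \geq 1$ and derive a contradiction. Writing $p_k := \alpha_k \in \C^n$, the kernel condition becomes $\sum_{k \in S} z_k^s \lambda_k \, p_k = 0$ for each $s = 1, \ldots, L$. The $L \times m$ matrix $(z_k^s \lambda_k)$ is a Vandermonde times a nonsingular diagonal, so it has rank $\min(L, m)$ and supplies that many independent linear relations among $\{p_k\}_{k \in S}$; this gives the upper bound $\rank\{p_k\}_{k \in S} \leq m - \min(L, m)$. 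The non-degeneracy assumption \eqref{eq:non-degenerate} then supplies a matching lower bound: when $M \leq n$ all $p_k$ are independent so $\rank\{p_k\}_{k \in S} = m$, while when $M > n$ the equality $\Span\{p_k\}_{k=1}^M = \C^n$ combined with a complement-dimension count gives $\rank\{p_k\}_{k \in S} \geq n + m - M$. The prescribed value $L = \max(M-n,0)+1$ is precisely calibrated so that these two bounds are incompatible for every $m \geq 1$, forcing $\ker \tilde B = \{0\}$.

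The hard part will be this rank argument and, in particular, verifying the sharpness of $L$ in both regimes ($M \leq n$ and $M > n$) across every support size $m$; once $\rank \tilde B = M$ is secured, the linear system determines $\{c_j\}$ uniquely and the modes are read off as the roots of \eqref{eq:polynomial}.
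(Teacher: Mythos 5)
Your reduction is the same as the paper's: you form the $nL\times M$ linear system for $(c_1,\dots,c_M)$ from \eqref{eq:dmd_eq}, factor the coefficient matrix as (amplitude-mode block)$\times$(Vandermonde), and reduce everything to showing the left factor has rank $M$. Where you genuinely diverge is in how that rank claim is proved. The paper argues constructively: it builds the row vectors $\mb{v}_{j,k}=(\alpha_1(j)z_1^k,\dots,\alpha_M(j)z_M^k)$ and shows by induction that each block $\{\mb{v}_{j,0},\dots,\mb{v}_{j,L-1}\}$ contributes enough new dimensions, using a Vandermonde-times-diagonal factorization for the first block (after replacing $\alpha_\bullet(1)$ by a linear combination with all entries nonzero). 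Your kernel/support analysis is cleaner and more symmetric: it avoids the somewhat delicate ``retake $\mb{v}_{1,0}$'' step and the induction, and it makes transparent exactly why $L=\max(M-n,0)+1$ is the right threshold. You also correctly flag the need for $z_k\neq 0$ (so that the Vandermonde factor is invertible), a point the paper passes over silently.

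One step, as literally stated, does not close. You claim the two bounds $\rank\{p_k\}_{k\in S}\le m-\min(L,m)$ and $\rank\{p_k\}_{k\in S}\ge n+m-M$ are incompatible for every $m\ge 1$. In the regime $M>n$ with support size $1\le m\le M-n$ we have $\min(L,m)=m$, so the upper bound reads $\rank\{p_k\}_{k\in S}\le 0$, while the lower bound reads $\rank\{p_k\}_{k\in S}\ge n+m-M\le 0$; these are perfectly compatible, both being satisfied by $\rank=0$. The contradiction for these small supports does not come from \eqref{eq:non-degenerate} at all but from the standing model assumption $\alpha_j\in\CA_V\otimes\C\setminus 0_V$: since $S\neq\emptyset$ and each $\alpha_k$ is a nonzero vector, $\rank\{p_k\}_{k\in S}\ge 1$, which does contradict the upper bound $0$. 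So the correct lower bound to carry through is $\max(1,\,n+m-M)$, and with that single amendment your case analysis is complete (for $m\ge L$ the bound $n+m-M$ beats the upper bound $m-L=n+m-M-1$; for $m<L$ the bound $1$ beats $0$). This is a one-line repair, but without it the argument genuinely fails to rule out kernel vectors of small support.
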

This result is viewed as \emph{Prony's method} \cite{prony1795essai} when $n=1$, and the \emph{dynamic mode decomposition} \cite{schmid2010dynamic} when $n\geq M$.

\begin{proof}
Let us consider the following linear equation according to \eqref{eq:dmd_eq};
\begin{align}
\left[\begin{smallmatrix}
F(1,1) & F(1,2) & \cdots & F(1,M) \\
\vdots & \vdots & & \vdots \\
F(n,1) & F(n,2) & \cdots & F(n,M) \\
F(1,2) & F(1,3) & \cdots & F(1,M+1) \\
\vdots & \vdots & & \vdots \\
F(n,2) & F(n,3) & \cdots & F(n,M+1) \\
\vdots & \vdots & & \vdots \\
F(1,L) & F(1,L+1) & \cdots & F(1,L+M-1) \\
\vdots & \vdots & & \vdots \\
F(n,L) & F(n,L+1) & \cdots & F(n,L+M-1) \\
\end{smallmatrix}\right]
\left[\begin{smallmatrix}
c_{M} \\ c_{M-1} \\ \vdots \\ c_1
\end{smallmatrix}\right]
=\left[\begin{smallmatrix}
F(1,M+1) \\
\vdots \\
F(n,M+1) \\
F(1,M+2) \\
\vdots \\
F(n,M+2) \\
\vdots \\
F(1,L+M) \\
\vdots \\
F(n,L+M)
\end{smallmatrix}\right].
\label{eq:dmd_matrix}
\end{align}
Here, our goal is to show that the $(nL, M)$ matrix on the left has rank $M$.
It is rewritten as
\begin{align*}
\left[\begin{smallmatrix}
\alpha_1(1) & \alpha_2(1) & \cdots & \alpha_M(1) \\
\vdots & \vdots & & \vdots \\
\alpha_1(n) & \alpha_2(n) & \cdots & \alpha_M(n) \\
\alpha_1(1)z_1 & \alpha_2(1)z_2 & \cdots & \alpha_M(1)z_M \\
\vdots & \vdots & & \vdots \\
\alpha_1(n)z_1 & \alpha_2(n)z_2 & \cdots & \alpha_M(n)z_M \\
\vdots & \vdots & & \vdots \\
\alpha_1(1)z_1^{L-1} & \alpha_2(1)z_2^{L-1} & \cdots & \alpha_M(1)z_M^{L-1} \\
\vdots & \vdots & & \vdots \\
\alpha_1(n)z_1^{L-1} & \alpha_2(n)z_2^{L-1} & \cdots & \alpha_M(n)z_M^{L-1}
\end{smallmatrix}\right]
\left[\begin{smallmatrix}
z_1 & z_1^2 & \cdots & z_1^M \\
z_2 & z_2^2 & \cdots & z_2^M \\
\vdots & \vdots & & \vdots \\
z_M & z_M^2 & \cdots & z_M^M
\end{smallmatrix}\right]
\end{align*}
by applying the definition of $F$ \eqref{eq:general_model}.
Since $\{z_j\}$ are assumed to be distinct, the Vandermonde matrix on the right has full rank.
We claim that the matrix on the left also has rank $M$.
When $n\geq M$, that is $L=1$, this follows from condition \eqref{eq:non-degenerate}.
Suppose $n < M$, that is $L = M - n + 1$.
Take $M$-dimensional row vectors
\begin{align*}
\mb{v}_{j,k} = \left(\alpha_1(j)z_1^k, \alpha_2(j)z_2^k, \cdots, \alpha_M(j)z_M^k\right)
\end{align*}
for $j = 1, 2, \cdots, n$ and $k = 0, 1, \cdots, L-1$.
Non-degenerate condition \eqref{eq:non-degenerate} means $n$ vectors $\{\mb{v}_{j,0}\}$ are linearly independent.
If necessary, retake $\mb{v}_{1,0}$ by a linear combination of $\{\mb{v}_{j,0}\}$ so that all entries of $\mb{v}_{1,0}$ are non-zero, which is possible because $\alpha_j\in \CA_V\otimes\C\setminus 0_V$.
Hence, we first obtain
\begin{align*}
\dim_\C\Span[\mb{v}_{1,0}, \mb{v}_{1,1}, \cdots, \mb{v}_{1,L-1}] = L
\end{align*}
because
\begin{align*}
\left[\begin{smallmatrix}
\alpha_1(1) & \alpha_2(1) & \cdots & \alpha_M(1) \\
\alpha_1(1)z_1 & \alpha_2(1)z_2 & \cdots & \alpha_M(1)z_M \\
\vdots & \vdots & & \vdots \\
\alpha_1(1)z_1^{L-1} & \alpha_2(1)z_2^{L-1} & \cdots & \alpha_M(1)z_M^{L-1}
\end{smallmatrix}\right]
=\left[\begin{smallmatrix}
1   & 1   & \cdots & 1   \\ 
z_1 & z_2 & \cdots & z_M \\
\vdots & \vdots & & \vdots \\
z_1^{L-1} & z_2^{L-1} & \cdots & z_M^{L-1}
\end{smallmatrix}\right]
\left[\begin{smallmatrix}
\alpha_1(1)   & 0   & \cdots & 0   \\ 
0 & \alpha_2(1) & & 0 \\
\vdots & & \ddots & \\
0 & 0 & \cdots & \alpha_M(1)
\end{smallmatrix}\right]
\end{align*}
Next, if $\mb{v}_{2,0} \in \Span[\mb{v}_{1,0}, \mb{v}_{1,1}, \cdots, \mb{v}_{1,L-1}]$, then we can describe $\mb{v}_{2,0} = \sum_{l=0}^{L-1} \beta_l\mb{v}_{1,l}$ for some $\beta_l$.
This also indicates $\mb{v}_{2,L-1} = \sum_{l=0}^{L-1} \beta_l\mb{v}_{1,L+l-1}$.
Hence, we obtain
\begin{align*}
\dim_\C\Span[\mb{v}_{1,0}, \mb{v}_{1,1}, \cdots, \mb{v}_{1,L-1}, \mb{v}_{2,0}, \mb{v}_{2,1}, \cdots, \mb{v}_{2,L-1}] \geq L+1,
\end{align*}
because condition \eqref{eq:non-degenerate} means $\mb{v}_{2,L-1} \neq \beta_0\mb{v}_{1,L-1}$.
Otherwise, $\Span[\mb{v}_{1,0}, \mb{v}_{1,1}, \cdots, \mb{v}_{1,L-1}, \mb{v}_{2,0}]$ becomes an $(L+1)$-dimensional vector space again.
By induction, we can conclude
\begin{align*}
\dim_\C\Span[\mb{v}_{1,0}, \cdots, \mb{v}_{1,L-1}, \mb{v}_{2,0}, \cdots, \mb{v}_{2,L-1}, \cdots, \mb{v}_{n,0}, \cdots, \mb{v}_{n,L-1}] \geq L+n-1 = M.
\end{align*}
Therefore, the coefficients $\{c_j\}$ are obtained as a unique solution of the linear equation \eqref{eq:dmd_matrix}, and the modes $\{z_j\}$ are determined as solutions of \eqref{eq:polynomial}.
\end{proof}
\begin{rem}
In practice, it is often the case that the number of modes is also unknown.
In that case, one can compute the number of modes as the rank of the matrix on the left of \eqref{eq:dmd_matrix} for large $M$ and $L$.
\end{rem}

\subsection{Special conditions for mode set}
\label{subsec:special_mode}
Now, we consider a special case of \eqref{eq:general_model}.
In practice, the time-series model $F$ often contains a constant term, which is represented as a mode $Z=1$.
In addition, it is usual to suppose that other modes are stationary and $F$ is $\R$-valued.
\begin{figure}[t]
\begin{center}
\includegraphics[width=4cm]{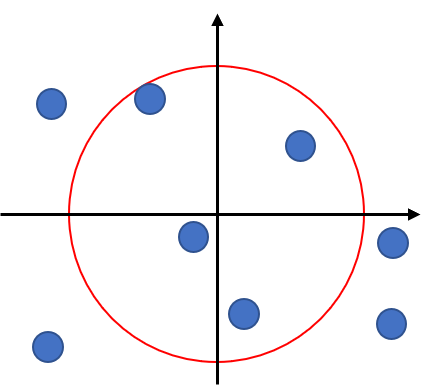}
\
\includegraphics[width=4cm]{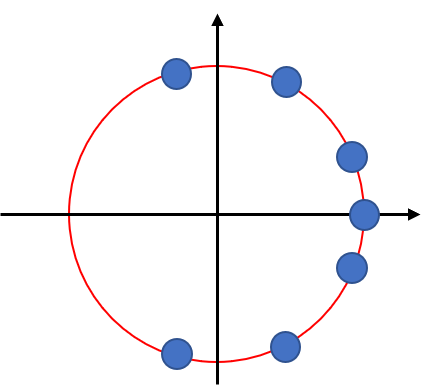}
\
\includegraphics[width=4cm]{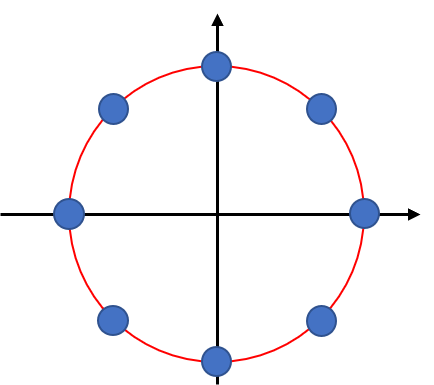}
\caption{[left] General mode set in \eqref{eq:general_model}.
[right] DFT mode set, lying on the unit circle at equal intervals.
[center] Particular mode set in \eqref{eq:real_const_model}, lying on the unit circle vertically symmetrically, including $Z=1$, like DFT modes, but not equispaced.}
\label{fig:modes}
\end{center}
\end{figure}
These conditions indicate a particular representation
\begin{align}
F(\x,t) = \alpha_0(\x) + \sum_{j=1}^{N}\left\{ \alpha_j(\x)z_j^t + \bar{\alpha}_j(\x)\bar{z}_j^t\right\}, \text{ for }t \in \Z,
\label{eq:real_const_model}
\end{align}
where $\alpha_0 \in \CA_V\setminus 0_V$, $\alpha_j \in \CA_V\otimes \C\setminus 0_V$ and $\{z_j \in \C \mid \Im z_j > 0, |z_j|=1$ for $1 \leq j \leq N$ and are distinct from each other$\}$.
This form is also viewed as a general form of the inverse discrete Fourier transform (DFT), the modes of which are given by $\{z_j=\exp(2\pi\i j /N)\}$.
See the difference of mode sets in Figure \ref{fig:modes}.
As above, we assume that the amplitudes satisfy the non-degenerate condition
\begin{align}
\dim_{\C}\Span[\alpha_0, \alpha_1, \cdots, \alpha_N] = \min(N+1,n).
\label{eq:non-degenerate2}
\end{align}
Then, as a special case of \eqref{eq:dmd_eq}, we have the following lemma;
\begin{lem}
\label{lem:real_const_dmd}
Take $d_j = (-1)^{j+1}e_j(z_1,\cdots,z_{N}, \bar{z}_1, \cdots, \bar{z}_{N})$ for $1 \leq j \leq 2N$.
Then, $G(\x,t;k):=F(\x,t+N+1+k)-F(\x,t+N+k)+F(\x,t+N+1-k)-F(\x,t+N-k) \ (0 \leq k \leq N)$ satisfies
\begin{align}
G(\x,t;N) - d_1G(\x,t;N-1) - \cdots - d_{N-1}G(\x,t;1) - \frac{1}{2}d_{N} G(\x,t;0) = 0.
\label{eq:real_const_dmd}
\end{align}
\end{lem}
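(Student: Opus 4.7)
My plan is to reduce the claim to the general recursion \eqref{eq:dmd_eq}, applied not to $F$ itself but to its first difference, after exploiting the palindromic structure that the stationarity assumption $|z_j|=1$ forces on the coefficients $d_j$.

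First I would introduce $H(\x,t) := F(\x,t+1) - F(\x,t)$. Substituting the expansion \eqref{eq:real_const_model} annihilates the constant mode $\alpha_0$ and presents $H$ in the form \eqref{eq:general_model} with exactly the $2N$ modes $\{z_1,\bar{z}_1,\ldots,z_N,\bar{z}_N\}$ and amplitudes $(z_j-1)\alpha_j$, $(\bar{z}_j-1)\bar{\alpha}_j$. Since the $d_j$ are the elementary-symmetric coefficients of precisely these roots, each summand $z_j^t$ and $\bar{z}_j^t$ satisfies the polynomial identity \eqref{eq:polynomial}, and hence
\begin{align*}
H(\x,s+2N) - d_1 H(\x,s+2N-1) - \cdots - d_{2N} H(\x,s) = 0
\end{align*}
for every $s \in \Z$. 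A direct rewrite of the definition also gives $G(\x,t;k) = H(\x,t+N+k) + H(\x,t+N-k)$ for $0 \le k \le N$.

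Second I would extract the palindromic structure of $\{d_j\}$ from $|z_j|=1$. Because $\bar{z}_j = z_j^{-1}$, the polynomial $P(Z) := \prod_j (Z-z_j)(Z-\bar{z}_j) = \prod_j (Z^2 - 2(\Re z_j)Z + 1)$ is reciprocal, i.e.\ $Z^{2N} P(1/Z) = P(Z)$. Matching this with $P(Z) = Z^{2N} - d_1 Z^{2N-1} - \cdots - d_{2N}$ gives $d_{2N} = -1$ (equivalently $e_{2N} = \prod_j |z_j|^2 = 1$) and $d_j = d_{2N-j}$ for $1 \le j \le 2N-1$.

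Third I plug into the left-hand side of \eqref{eq:real_const_dmd}. Using $G(\x,t;N-j) = H(\x,t+2N-j) + H(\x,t+j)$ for $0 \le j \le N$, together with $G(\x,t;0) = 2H(\x,t+N)$, the left-hand side expands to
\begin{align*}
H(\x,t+2N) + H(\x,t) - \sum_{j=1}^{N-1} d_j\bigl[H(\x,t+2N-j) + H(\x,t+j)\bigr] - d_N H(\x,t+N).
\end{align*}
The identity $d_{2N} = -1$ absorbs the stray $H(\x,t)$ as $-d_{2N}H(\x,t)$, while palindromicity $d_j = d_{2N-j}$ rewrites each $-d_j H(\x,t+j)$ as $-d_{2N-j}H(\x,t+j)$, supplying precisely the missing shifts $2N-j \in \{N+1,\ldots,2N-1\}$. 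A reindexing then collapses the whole expression to $H(\x,t+2N) - \sum_{j=1}^{2N} d_j H(\x,t+2N-j)$, which vanishes by the recursion of step one. The only substantive obstacle is this bookkeeping collapse; once the differenced recursion and the palindromic relations are in hand, the identity is essentially a two-line reindexing.
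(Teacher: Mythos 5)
Your proof is correct and rests on exactly the same ingredients as the paper's: the annihilating polynomial $(Z-1)\prod_j(Z-z_j)(Z-\bar{z}_j)$ together with the reciprocal-polynomial identities $d_{2N}=-1$ and $d_j=d_{2N-j}$ forced by $|z_j|=1$. The only difference is organizational --- you apply the factor $Z-1$ to $F$ first (producing $H$) and then run the degree-$2N$ recursion, whereas the paper expands the degree-$(2N+1)$ product and groups monomials into the combinations $G(\x,t;k)$ directly --- so this is essentially the paper's argument.
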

\begin{proof}
For simplicity, set $z_0 = 1$.
Then, all $Z = z_0, z_1, \cdots, z_{N}, \bar{z}_{1}, \cdots, \bar{z}_{N}$ satisfy
\begin{align*}
(Z-1)\left(Z^{2N} - d_1Z^{2N-1} - d_2Z^{2N-2} - \cdots - d_{2N-1}Z - d_{2N}\right) = 0.
\end{align*}
Now we have $d_{2N} = -1$ and $d_{2N-j} = d_j$ because of the conditions for the mode set $\{z_j\}$, as used in \cite{li1993asymptotic}.
Hence, the above equation can be rewritten as
\begin{align*}
Z^{2N+1} - Z^{2N}+Z-1&-d_1(Z^{2N} - Z^{2N-1}+Z^2-Z) - \cdots \\
& - d_{N-1}(Z^{N+2}-Z^{N+1}+Z^{N}-Z^{N-1}) - d_{N}(Z^{N+1}-Z^{N}) = 0.
\end{align*}
Therefore, by replacing $Z^k$ by $F(\x,t+k)$, we obtain the assertion.
\end{proof}
\begin{rem}
\label{rem:stationary_mode}
For any mode set $\{z_j \in \C \mid |z_j|=1, 1 \leq j \leq N\}$, the coefficients $\{d_j\}$ given in Lemma \ref{lem:real_const_dmd} satisfy $d_{2N} = -1$ and $d_{2N-j} = d_j$, as above.
However, the converse is not always true.
Namely, for the polynomial equation
\begin{align*}
Z^{2N} - d_{1}Z^{2N-1} - \cdots - d_{N-1}Z^{N+1} - d_NZ^{N} - d_{N-1}Z^{N-1} \cdots - d_{1}Z +1 =0,
\end{align*}
when $Z=z$ is a root, $\bar{z}$ and $1/z$ become roots as well, where these two roots do not always coincide, as might be expected.
For example, we can give a counterexample $(Z-2\i)(Z+2\i)(Z-\frac{1}{2}\i)(Z+\frac{1}{2}\i) = Z^4 + \frac{17}{4}Z^2 + 1$ for $N=2$.
\end{rem}
The above lemma helps us to determine the modes from fewer observation points than Proposition \ref{prop:prony}, which needs $L+M = 2N+2+\max(2N+1-n,0)$ observation points for model \eqref{eq:real_const_dmd}.
\begin{prop}
\label{prop:real_const_prony}
Set $L=\max(N-n,0)+1$.
The mode set in model F of \eqref{eq:real_const_model} is determined from observed values at $L+2N+1$ integer points $\{F(\cdot,t) \in \CA_V \mid t = 1, 2, \cdots, L+2N+1 \}$ if condition \eqref{eq:non-degenerate2} is satisfied.
\end{prop}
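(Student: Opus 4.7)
The plan is to imitate the proof of Proposition \ref{prop:prony}: for each $\x \in V$ and each $t = 1, \ldots, L$, instantiate the recurrence \eqref{eq:real_const_dmd} as one linear equation in the $N$ unknowns $d_1, \ldots, d_N$, assemble the $nL$ equations into a single matrix system, and show that the coefficient matrix has rank $N$. Since $G(\x, t; k)$ involves $F(\x, t')$ only for the four indices $t' \in \{t+N-k,\, t+N+1-k,\, t+N+k,\, t+N+1+k\}$ and $(t, k)$ ranges over $\{1, \ldots, L\} \times \{0, \ldots, N\}$, the required index set is exactly $t' = 1, 2, \ldots, L+2N+1$, which justifies the stated observation count.

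To analyze the matrix I would substitute \eqref{eq:real_const_model} into the definition of $G$. The constant amplitude $\alpha_0$ cancels because of the alternating signs, and for each $l$ one computes
\[
\alpha_l(\x)\bigl(z_l^{t+N+1+k} - z_l^{t+N+k} + z_l^{t+N+1-k} - z_l^{t+N-k}\bigr) = \alpha_l(\x)(z_l-1)z_l^{t+N}(z_l^k + z_l^{-k}).
\]
Adding the conjugate contribution and using $|z_l| = 1$ to identify $z_l^k + z_l^{-k} = 2\cos(k\theta_l)$ with $\theta_l := \arg z_l \in (0, \pi)$, one obtains $G(\x, t; k) = \sum_{l=1}^N 2\cos(k\theta_l)\gamma_l(\x, t)$, where $\gamma_l(\x, t) := 2\Re[\alpha_l(\x) z_l^N(z_l - 1) z_l^t]$. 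Consequently the $nL \times N$ coefficient matrix (rows indexed by $(\x, t)$, columns by $k = 0, \ldots, N-1$) factors as $\Gamma\, C$, with $\Gamma := (\gamma_l(\x, t))$ and $C := (2\cos(k\theta_l))_{l, k}$.

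The $N \times N$ matrix $C$ is nonsingular: writing $2\cos(k\theta_l) = 2T_k(\cos\theta_l)$ for the Chebyshev polynomial $T_k$ of degree $k$, and noting that $\cos\theta_1, \ldots, \cos\theta_N$ are distinct (since the $z_l$ lie in the open upper unit semicircle), $C$ differs from the Vandermonde matrix in the $\cos\theta_l$ by an invertible upper-triangular change of polynomial basis. Hence $\rank(\Gamma\, C) = \rank \Gamma$, reducing everything to the claim $\rank \Gamma = N$.

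Showing $\rank \Gamma = N$ is the main obstacle; I would handle it by adapting the inductive scheme used for Proposition \ref{prop:prony}. Writing $\gamma_l(\x, t) = \tilde\alpha_l(\x) z_l^t + \overline{\tilde\alpha_l(\x) z_l^t}$ with $\tilde\alpha_l := \alpha_l \cdot z_l^N(z_l - 1)$, the matrix $\Gamma$ has essentially the same formal shape as the one analyzed in Proposition \ref{prop:prony}, but with $2N$ paired complex modes $\{z_l, \bar z_l\}$ compressed into $N$ real columns. The delicate point is that \eqref{eq:non-degenerate2} is a hypothesis about $\{\alpha_0, \alpha_1, \ldots, \alpha_N\}$, while only the oscillating amplitudes $\{\tilde\alpha_l\}$ enter $\Gamma$; the plan is to exploit the fact that $\alpha_0$ is killed by the alternating combination used to build $G$, so that after a suitable linear recombination of rows over $V$ the $\alpha_0$-direction drops out and the surviving $\{\tilde\alpha_l\}_{l=1}^N$ inherit the non-degeneracy needed to run the $V$-induction of Proposition \ref{prop:prony} on $\Gamma$. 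Once $\rank \Gamma = N$ is secured, the $d_j$ are uniquely determined, and the set $\{z_l, \bar z_l\}$ is recovered as those roots of the symmetric polynomial provided by Lemma \ref{lem:real_const_dmd} that lie on the unit circle (cf.\ Remark \ref{rem:stationary_mode}).
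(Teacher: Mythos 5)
Your reduction is sound up to the final step, and it is actually more explicit than the paper's own argument: the paper takes $M=2N+1$ in \eqref{eq:dmd_matrix}, invokes Lemma \ref{lem:real_const_dmd} to pass to the system \eqref{eq:real_const_dmd_matrix} for $d_1,\dots,d_N$, and simply asserts that the resulting $(nL,N)$ matrix ``still has rank $N$.'' Your factorization of that matrix as $\Gamma\,C$ with $C=(2\cos(k\theta_l))_{l,k}$ invertible by the Chebyshev--Vandermonde argument is correct and cleanly isolates the real content of that assertion, namely $\rank\Gamma=N$.

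But that is exactly the step you do not prove, and the plan you sketch for it does not go through as stated. The difficulty is not the removal of $\alpha_0$ --- as you note, it is annihilated by the alternating combination defining $G$ --- but that condition \eqref{eq:non-degenerate2} is a statement about $\C$-linear independence of $\{\alpha_0,\dots,\alpha_N\}$, whereas the columns of $\Gamma$ are the \emph{real} vectors $\gamma_l(\x,t)=2\Re\bigl[\alpha_l(\x)(z_l-1)z_l^{t+N}\bigr]$. The map $\alpha\mapsto\Re[\mu\alpha]$ is $\R$-linear but not $\C$-linear, so independence of the $\alpha_l$ is not ``inherited'' by the $\gamma_l$. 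Concretely, in the regime $n\geq N$ (so $L=1$) one may take $\alpha_l(\x)=\i\,\overline{(z_l-1)z_l^{N+1}}\,r(\x)$ with $r$ real and nonzero; then $\alpha_l(\x)(z_l-1)z_l^{N+1}$ is purely imaginary for every $\x$, the whole column $\gamma_l$ vanishes, and yet $\alpha_l\neq 0_V$ and \eqref{eq:non-degenerate2} can still hold. So the inductive scheme of Proposition \ref{prop:prony} cannot simply be transplanted onto $\Gamma$; one needs an additional argument (or hypothesis) controlling the phases of the amplitudes relative to the modes. To be fair, the paper's own proof asserts the same rank claim with no more justification, so you have correctly located the crux of the proposition --- but as a proof, your submission stops precisely where the work begins.
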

\begin{proof}
For linear equation \eqref{eq:dmd_matrix} for the coefficients $\{c_j\}$, take $M = 2N+1$ and $L$ as asserted.
Then, the $(nL,2N+1)$ matrix has rank $N$ because of non-degenerate condition \eqref{eq:non-degenerate2}.
By using the argument in the proof of Lemma \ref{lem:real_const_dmd}, we can transform that linear equation into the following equation for the coefficients $\{d_j\}$:
\begin{align}
\left[\begin{smallmatrix}
\frac{1}{2}G(1,1;0) & G(1,1;1) & \cdots & G(1,1;N-1) \\
\vdots & \vdots & & \vdots \\
\frac{1}{2}G(n,1;0) & G(n,1;1) & \cdots & G(n,1;N-1) \\
\frac{1}{2}G(1,2;0) & G(1,2;1) & \cdots & G(1,2;N-1) \\
\vdots & \vdots & & \vdots \\
\frac{1}{2}G(n,2;0) & G(n,2;1) & \cdots & G(n,2;N-1) \\
\vdots & \vdots & & \vdots \\
\frac{1}{2}G(1,L;0) & G(1,L;1) & \cdots & G(1,L;N-1) \\
\vdots & \vdots & & \vdots \\
\frac{1}{2}G(n,L;0) & G(n,L;1) & \cdots & G(n,L;N-1)
\end{smallmatrix}\right]
\left[\begin{smallmatrix}
d_{N} \\ d_{N-1} \\ \vdots \\ d_1
\end{smallmatrix}\right]
=\left[\begin{smallmatrix}
G(1,1;N) \\
 \vdots \\
G(n,1;N) \\
G(1,2;N) \\
 \vdots \\
G(n,2;N) \\
 \vdots \\
G(1,L;N) \\
 \vdots \\
G(n,L;N) 
\end{smallmatrix}\right].
\label{eq:real_const_dmd_matrix}
\end{align}
Here, the $(nL,N)$ matrix on the left still has rank $N$, and then the coefficients $\{d_j\}$ and the modes $\{z_j\}$ are determined.
\end{proof}

\subsection{Amplitude estimation}
\label{subsec:mode_expansion}
Once the modes $\{z_j\}$ in the general model \eqref{eq:general_model} or the particular model \eqref{eq:real_const_model} are obtained, the corresponding amplitudes $\{\alpha_j\}$ are also calculated by the curve fitting method.
For model \eqref{eq:general_model}, we have
\begin{align}
\left[\begin{smallmatrix}
F(1,1) & F(1,2) & \cdots & F(1,T) \\
\vdots & \vdots & & \vdots \\
F(n,1) & F(n,2) & \cdots & F(n,T)
\end{smallmatrix}\right]
=\left[\begin{smallmatrix}
\alpha_1(1) & \alpha_2(1) & \cdots & \alpha_M(1) \\
\vdots & \vdots & & \vdots \\
\alpha_1(n) & \alpha_2(n) & \cdots & \alpha_M(n)
\end{smallmatrix}\right]
\left[\begin{smallmatrix}
z_1 & z_1^2 & \cdots & z_1^T \\
z_2 & z_2^2 & \cdots & z_2^T \\
\vdots & \vdots & & \vdots \\
z_M & z_M^2 & \cdots & z_M^T
\end{smallmatrix}\right],
\label{eq:amp}
\end{align}
for any $T\in\Z_{>0}$.
Since the Vandermonde matrix on the right has rank $M$ when $T \geq M$, observed values at $M$ integer points $\{F(\cdot,t)\in\CA_V\otimes\C\mid t = 1, 2, \cdots, M\}$ determine all of the amplitudes $\{\alpha_j\}$.

Likewise, for model \eqref{eq:real_const_model}, we have
\begin{align}
\left[\begin{smallmatrix}
F(1,1) & F(1,2) & \cdots & F(1,T) \\
\vdots & \vdots & & \vdots \\
F(n,1) & F(n,2) & \cdots & F(n,T)
\end{smallmatrix}\right]
=\left[\begin{smallmatrix}
\alpha_0(1) & \alpha_1(1) & \bar{\alpha}_1(1) & \cdots & \bar{\alpha}_N(1) \\
\vdots & \vdots & \vdots & & \vdots \\
\alpha_0(n) & \alpha_1(n) & \bar{\alpha}_1(n) & \cdots & \bar{\alpha}_N(n)
\end{smallmatrix}\right]
\left[\begin{smallmatrix}
1 & 1 & \cdots & 1 \\
z_1 & z_1^2 & \cdots & z_1^T \\
\bar{z}_1 & \bar{z}_1^2 & \cdots & \bar{z}_1^T \\
\vdots & \vdots & & \vdots \\
\bar{z}_N & \bar{z}_N^2 & \cdots & \bar{z}_N^T
\end{smallmatrix}\right],
\label{eq:real_const_amp}
\end{align}
for any $T\in\Z_{>0}$.
Hence, the observed values at $2N+1$ integer points $\{F(\cdot,t)\in\CA_V\mid t = 1, 2, \cdots, 2N+1\}$ determine all of the amplitudes $\{\alpha_j\}$.
More generally, even for a dataset not generated from model \eqref{eq:real_const_model}, we can obtain this expansion.
In other words, the following lemma holds.
\begin{lem}
\label{lem:complex_amp}
Any $2N+1$ real values $\{ F_t \in \R \mid t = 1, 2, \cdots, 2N+1 \}$ are represented as
\begin{align*}
F_t = a_0z_0 + \sum_{j=1}^{N}\{a_jz_j^t + a_{j+N}\bar{z}_j^t\}, \text{ for }t = 1, 2, \cdots, 2N+1,
\end{align*}
by distinct modes $z_0=1$ and $\{z_j, \bar{z}_j\mid 1\leq j \leq N\}$, and $a_j \in \C$.
Moreover, $a_0\in \R$ and $a_{j+N} = \bar{a}_j$ hold.
\end{lem}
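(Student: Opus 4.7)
The plan is to view the desired expansion as the solution of a $(2N+1)\times(2N+1)$ complex linear system, to establish existence and uniqueness of its coefficients by a Vandermonde argument, and then to extract the reality conditions $a_0\in\R$ and $a_{j+N}=\bar{a}_j$ from that uniqueness combined with the hypothesis $F_t\in\R$.

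First I would relabel the $2N+1$ modes as $\zeta_0:=z_0=1$, $\zeta_j:=z_j$ and $\zeta_{j+N}:=\bar{z}_j$ for $1\leq j\leq N$, so that the claimed expansion reads $F_t=\sum_{k=0}^{2N}a_k\zeta_k^t$ for $t=1,2,\ldots,2N+1$. This is a square linear system in the unknowns $a_0,\ldots,a_{2N}\in\C$, with coefficient matrix having entry $\zeta_k^t$ in row $t$ and column $k$. Factoring the non-zero scalar $\zeta_k$ out of each column reduces this matrix to the Vandermonde matrix in the nodes $\zeta_0,\zeta_1,\ldots,\zeta_{2N}$; since by hypothesis these $2N+1$ modes are pairwise distinct, the Vandermonde determinant is non-zero and the system has a unique solution $(a_0,a_1,\ldots,a_{2N})\in\C^{2N+1}$. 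This is essentially the scalar $n=1$ case of equation \eqref{eq:real_const_amp}.

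Next I would exploit $F_t\in\R$. Applying complex conjugation termwise at every $t$ and using $\bar{\zeta}_0=\zeta_0$, $\bar{\zeta}_j=\zeta_{j+N}$, and $\bar{\zeta}_{j+N}=\zeta_j$ for $1\leq j\leq N$, one obtains a second expansion of the same real numbers $F_t=\sum_{k=0}^{2N}a'_k\zeta_k^t$ in which the coefficients of $\zeta_0^t$, $\zeta_j^t$, $\zeta_{j+N}^t$ are $\bar{a}_0$, $\bar{a}_{j+N}$, $\bar{a}_j$ respectively. By the uniqueness established in the first step, $a'_k=a_k$ for every $k$, forcing $\bar{a}_0=a_0$ (hence $a_0\in\R$) and $\bar{a}_j=a_{j+N}$, i.e.\ $a_{j+N}=\bar{a}_j$, which is the claim.

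There is no serious obstacle: the argument reduces to the non-vanishing of a Vandermonde determinant in $2N+1$ distinct nodes plus a brief uniqueness-and-conjugation symmetry. The only small point to verify is that each $\zeta_k$ is non-zero so that the column-scaling is legitimate; this is automatic, since $\zeta_0=1$ and, within each conjugate pair, distinctness $z_j\neq\bar{z}_j$ forces $z_j\neq 0$.
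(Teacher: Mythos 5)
Your proposal is correct and takes essentially the same route as the paper: the paper also establishes that the $2N+1$ vectors $(\zeta_k,\zeta_k^2,\ldots,\zeta_k^{2N+1})$ form a basis of $\C^{2N+1}$ via the distinctness of the modes, and then derives $a_0\in\R$ and $a_{j+N}=\bar{a}_j$ from the conjugation symmetry of the mode set (phrased there through the dual basis rather than through uniqueness of the linear system, but these are the same argument). Your explicit check that each $\zeta_k\neq 0$ is a minor point the paper leaves implicit.
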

\begin{proof}
Take $(2N+1)$-dimensional vectors
\begin{align*}
\mb{v}_0 = (1, 1, \cdots, 1), \ \mb{v}_j = (z_j, z_j^2, \cdots, z_j^{2N+1}), \ \mb{v}_{j+N} = (\bar{z}_j, \bar{z}_j^2, \cdots, \bar{z}_j^{2N+1})
\end{align*}
for $1 \leq j \leq N$.
These vectors consist of a basis of $\C^{2N+1}$ because of the distinct condition.
Then, we have their dual vectors $\{ \mb{v}_j^* \mid 0 \leq j \leq 2N\}$ such that $\langle \mb{v}_j, \mb{v}^*_l \rangle = \delta_{j,l}$, which are given by the inverse matrix of the matrix formed by $\{ \mb{v}_j \}$.
Therefore, $a_j$ is defined by $\langle \mb{F}, \mb{v}^*_j \rangle$ for $\mb{F} = (F_1,\cdots,F_{2N+1})$.
Moreover, for $1 \leq j \leq N$, we have $\bar{a}_j = \langle \mb{F}, \bar{\mb{v}}^*_j \rangle = \langle \mb{F}, \mb{v}^*_{j+N} \rangle = a_{j+N}$ because $\bar{\mb{v}}^*_j$ is the dual vector of $\bar{\mb{v}}_j = \mb{v}_{j+N}$.
\end{proof}

In either case, the number of observation points needed to determine the amplitudes is less than that needed to determine the modes shown in Propositions \ref{prop:prony} and \ref{prop:real_const_prony}.
Therefore, the mode expansion techniques are summarized as follows:
\begin{thm}
\label{thm:mode_extraction}
\emph{(i)} General model \eqref{eq:general_model} is determined from $n(L+M)$ complex values $\{F(\x,t)\in \C\mid \x \in V, t = 1, 2, \cdots, L+M\}$ for $L=\max(M-n,0)+1$ as in Proposition \ref{prop:prony}.
When $n=1$ or $M$, this number coincides with the complex dimension $M(n+1)$ of the model parameter.

\emph{(ii)} Particular model \eqref{eq:real_const_model} is determined from $n(L+2N+1)$ real values $\{F(\x,t)\in \R\mid \x \in V, t = 1, 2, \cdots, L+2N+1\}$ for $L=\max(N-n,0)+1$, as in Proposition \ref{prop:real_const_prony}.
When $n=1$ or $N$, this number coincides with the real dimension $(2N+1)n+N$ of the model parameter.
\end{thm}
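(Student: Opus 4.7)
The theorem is essentially a summary combining the mode-determination results (Propositions \ref{prop:prony} and \ref{prop:real_const_prony}) with the amplitude-determination results (equation \eqref{eq:amp} for the general case and Lemma \ref{lem:complex_amp}/\eqref{eq:real_const_amp} for the particular case), plus a parameter-counting sanity check. So the plan naturally splits in two parts.

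For part (i), the plan is to first invoke Proposition \ref{prop:prony}: from the $n(L+M)$ complex observations $\{F(\x,t)\mid \x\in V,\, 1\le t\le L+M\}$ with $L=\max(M-n,0)+1$, the modes $\{z_j\}_{j=1}^M$ are uniquely determined. Once the $\{z_j\}$ are in hand, the $T=M$ subset $\{F(\x,t)\mid \x\in V,\, 1\le t\le M\}$ (already contained in the given data since $L+M\ge M$) fixes the amplitudes $\{\alpha_j\}$ via \eqref{eq:amp}, because the $M\times M$ Vandermonde factor on the right of that identity is invertible under the distinctness hypothesis. Combining these two steps gives full determination of the model. For the dimension claim, I would count that the general model has $M$ complex modes plus $nM$ complex amplitude values, i.e.\ $M(n+1)$ complex parameters. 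When $n=1$, $L=M$ so $n(L+M)=2M=M(n+1)$; when $n=M$, $L=1$ so $n(L+M)=M(M+1)=M(n+1)$. Both edge cases match exactly.

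For part (ii) the plan is parallel. Proposition \ref{prop:real_const_prony} determines the modes $\{z_j\}_{j=1}^N$ from the $n(L+2N+1)$ real observations with $L=\max(N-n,0)+1$, and then the $2N+1$ time slices $\{F(\x,t)\mid 1\le t\le 2N+1\}$ fix the amplitudes $\{\alpha_0,\alpha_1,\ldots,\alpha_N\}$ through \eqref{eq:real_const_amp}; here one can either invoke Lemma \ref{lem:complex_amp} pointwise in $\x$ or note that the $(2N+1)\times(2N+1)$ matrix of mode powers on the right of \eqref{eq:real_const_amp} is invertible by the distinctness of $\{1,z_1,\bar z_1,\ldots,z_N,\bar z_N\}$, and the reality/conjugacy of the amplitudes is automatic from the reality of $F$ as in the last sentence of Lemma \ref{lem:complex_amp}. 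For the dimension count, the real parameters are: $N$ (one angle per unimodular $z_j$), $n$ for $\alpha_0$, and $2nN$ for the complex-valued $\alpha_1,\ldots,\alpha_N$ at each $\x\in V$; total $(2N+1)n+N$. When $n=1$, $L=N$ and $n(L+2N+1)=3N+1=(2N+1)\cdot 1+N$; when $n=N$, $L=1$ and $n(L+2N+1)=N(2N+2)=(2N+1)N+N$. Again both agree.

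There is essentially no genuine obstacle: the two non-trivial inputs (the rank argument for the mode-extraction matrix and the stationarity/conjugation symmetry used in Lemma \ref{lem:real_const_dmd}) have already been handled in the earlier propositions, and the amplitude step reduces to inverting a Vandermonde-type matrix whose invertibility was already justified. The only thing one must be a little careful about is that the data count claimed in the theorem is the one needed to pin down the modes; the amplitudes come \emph{for free} because the amplitude system uses strictly fewer time samples than the mode system, which is precisely the opening sentence I would state before beginning the proof. The dimension-matching verification in the two edge cases $n=1$ and $n=M$ (resp.\ $n=N$) is then recorded as the final arithmetic check.
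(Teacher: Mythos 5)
Your proposal is correct and follows essentially the same route as the paper, which states Theorem \ref{thm:mode_extraction} without a separate proof precisely because it is the summary you describe: modes from Propositions \ref{prop:prony} and \ref{prop:real_const_prony}, amplitudes from the already-contained first $M$ (resp.\ $2N+1$) time slices via the invertible Vandermonde systems \eqref{eq:amp} and \eqref{eq:real_const_amp}, plus the arithmetic check of the parameter counts in the edge cases. Your dimension counts $M(n+1)$ and $(2N+1)n+N$ and the verifications at $n=1$ and $n=M$ (resp.\ $n=N$) are all accurate.
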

We can use the estimated mode expansion for forecasting unobserved data.
\begin{cor}
\label{cor:forecasting}
When $\{\alpha_j, z_j\}$ are determined from given observed points in Theorem \ref{thm:mode_extraction}, further values $\{ F(\x,t) \}$ are computed for any $t\in \Z$.
\end{cor}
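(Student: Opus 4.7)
The plan is essentially an observation: both model \eqref{eq:general_model} and its specialization \eqref{eq:real_const_model} are given by closed-form expressions that are already defined for every $t \in \Z$, not merely for the sampling indices used to identify the parameters. Hence once Theorem \ref{thm:mode_extraction} has produced the full parameter list $\{\alpha_j, z_j\}$, forecasting is nothing more than evaluation of the same formula at a new integer time.

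Concretely, I would proceed as follows. First, invoke Theorem \ref{thm:mode_extraction}(i) (respectively (ii)) to certify that from the observed points the mode set $\{z_j\}$ and the amplitude set $\{\alpha_j\}$ are uniquely recovered. Next, note that each mode $z_j$ is nonzero: in the general model \eqref{eq:general_model} the $z_j$ are the roots of \eqref{eq:polynomial}, which lie in $\C \setminus \{0\}$ because they are assumed distinct and appear as base of the exponential $z_j^t$ (and in the particular model \eqref{eq:real_const_model} they lie on the unit circle by construction). Thus $z_j^t$ is well-defined for every $t \in \Z$, including negative $t$. Finally, define the forecast at an arbitrary time $t \in \Z$ by plugging into \eqref{eq:general_model} or \eqref{eq:real_const_model} directly:
\begin{align*}
F(\x,t) = \sum_{j=1}^{M}\alpha_j(\x)z_j^t,
\end{align*}
or the analogous sum with conjugates for the particular model. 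By construction this agrees with the observed values on the sampling set, and extends $F(\x,\cdot)$ to all of $\Z$.

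There is no real obstacle here: the content of the corollary is simply that the identification step of Theorem \ref{thm:mode_extraction} fixes \emph{all} parameters of the generating model, so the model itself (and hence its values at any $t\in\Z$) is determined. The only minor point worth flagging is the nonvanishing of the $z_j$, which ensures that $z_j^t$ makes sense for $t \leq 0$ and that the forecast formula is unambiguous once a branch of $\arg z_j$ is fixed (irrelevant here, since $t$ is an integer).
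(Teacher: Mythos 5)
Your proposal is correct and coincides with what the paper intends: the corollary is stated without proof precisely because, once Theorem \ref{thm:mode_extraction} fixes all parameters $\{\alpha_j, z_j\}$, forecasting is just evaluation of the closed-form model \eqref{eq:general_model} or \eqref{eq:real_const_model} at any $t\in\Z$. Your remark on $z_j\neq 0$ is a reasonable (if pedantic) addition for negative $t$, though note that nonvanishing follows from the model being defined on all of $\Z$ rather than from distinctness of the roots.
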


\subsection{Example}
\label{subsec:example}
Here, let us consider the following $n=1$ time-series data for example;
\begin{figure}[t]
\begin{center}
\includegraphics[height=4cm]{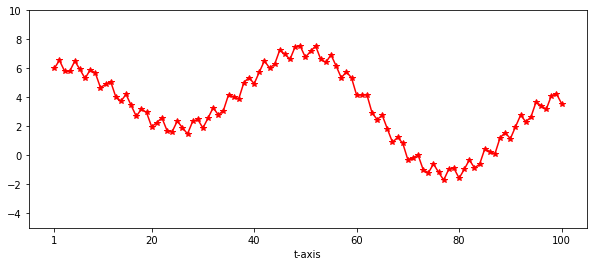}
\caption{Example data in \eqref{eq:raw_data}.}
\label{fig:raw_data}
\end{center}
\end{figure}
\begin{align}
F(t) = 2\sin(0.016\pi t) + 3\cos(0.04\pi t) + \frac{1}{2}\sin(0.6\pi t) + 3,
\label{eq:raw_data}
\end{align}
for $t = 1, 2, \cdots, 100$.
See the plots in Figure \ref{fig:raw_data}.

\begin{figure}[t]
\begin{center}
\includegraphics[height=4cm]{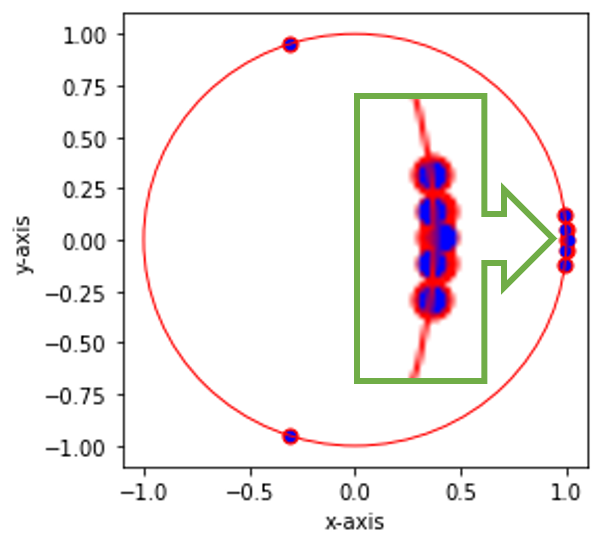}
\ 
\includegraphics[height=4cm]{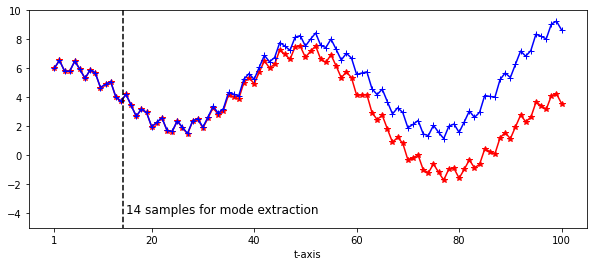}
\caption{[left] True mode set (red) and estimated mode set (blue) by Proposition \ref{prop:prony}.
[right] The blue plot indicates a forecast after $t=14$ by using the estimated modes and amplitudes overlaid on the red true plot in Figure \ref{fig:raw_data}.}
\label{fig:DMD_modes}
\end{center}
\end{figure}
By applying Proposition \ref{prop:prony} for $M = 7$ and $L = 7$ to $\{F(t)\in\R\mid t = 1, 2, \cdots, 14\}$, we estimate modes and amplitudes, and then forecast the remaining values for $t > 14$.
As is well known, Prony's method is too sensitive to extract precise modes from numerical values, given as 64-bit floating point numbers in our case.
Therefore, the results in Figure \ref{fig:DMD_modes} show that the estimated modes are slightly different from the true values, and then such error accumulates to give a bad forecast.

\begin{figure}[t]
\begin{center}
\includegraphics[height=4cm]{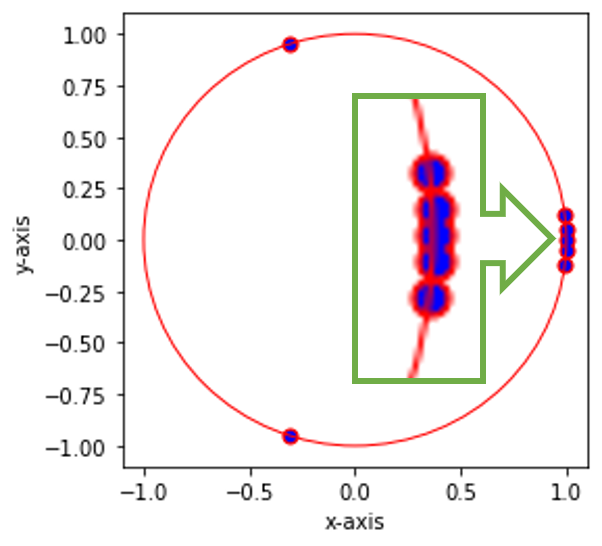}
\ 
\includegraphics[height=4cm]{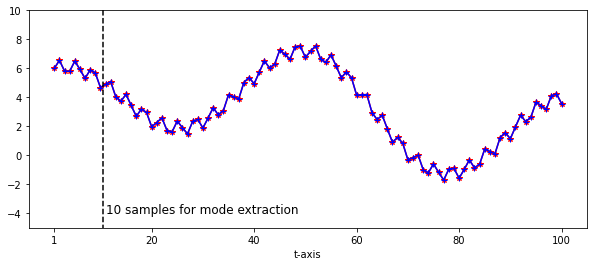}
\caption{[left] True mode set (red) and estimated mode set (blue) by Proposition \ref{prop:real_const_prony}.
[right] The blue forecasting plot after $t=10$ lies exactly on the red true plot in Figure \ref{fig:raw_data}.}
\label{fig:DMD-DFT_modes}
\end{center}
\end{figure}
On the other hand, Proposition \ref{prop:real_const_prony} for $N = 3$ and $L = 3$ gives better results from a smaller number of values $\{F(t)\in\R\mid t = 1, 2, \cdots, 10\}$, as in Figure \ref{fig:DMD-DFT_modes}.
This is because the absolute values of extracted modes are controlled to be $1$, as explained in Figure \ref{fig:modes}.

In \S \ref{sec:experiment}, we explain a more detailed algorithm with a graph recovery procedure.

\section{Graph wave equation and graph recovery}
\label{sec:graphwave}
In this section, we start from a review of the graph Laplacian and the graph wave equation from \S \ref{subsec:laplacian} to \S \ref{subsec:graph_wave}.
In order to see an explicit solution of the graph wave equation, we calculate eigenfunctions for a path graph in \S \ref{subsec:fourier}, which is used in \S \ref{sec:experiment}.
Finally, we solve a graph recovery problem from the graph wave functions in \S \ref{subsec:graph_recovery}.

\subsection{Graph Laplacian}
\label{subsec:laplacian}
Let $\{\ww_{\x\y}\}$ be graph weights that satisfy $\ww_{\y\x} = \ww_{\x\y} \geq 0$ and $\ww_{\x\x}=0$ for $\x\neq\y \in V$.
In the present paper, we regard $(V, \{\ww_{\x\y}\})$ as a weighted undirected graph that has an edge $\x$-$\y$ if and only if $\ww_{\x\y} > 0$.

For a function $f\in \CA_V$, we define \emph{the graph Laplacian} $\lap\colon \CA_V\rightarrow \CA_V$ as
\begin{align}
\lap f(\x) := \sum_{\y \in V}\ww_{\x\y}(f(\x) - f(\y))  = \deg(\x)f(\x) - \sum_{\y\in V}\ww_{\x\y}f(\y), \label{eq:laplacian}
\end{align}
where $\deg(\x) := \sum_{\y\in V}\ww_{\x\y}$.
Note that graph weights $\{\ww_{\x\y}\}$ are defined as a metric of $1$-forms over a discrete set $V$, and the graph Laplacian is given as the Laplace-Beltrami operator with an inner product $\langle f, g\rangle_V:=\sum_{\x\in V}f(\x)g(\x)$ \cite{takayama2020geometric}.
Since the Laplacian $\lap$ is self-adjoint for this inner product, we can take eigenfunctions $\{v_k\in \CA_V \mid 0 \leq k \leq n-1\}$ as follows:
\begin{align}
\lap v_k = \rho_k v_k, \ 0 = \rho_0 \leq \rho_1 \leq \cdots \leq \rho_{n-1}, \ \langle v_k,v_l \rangle_V = \delta_{k,l}.
\label{eq:eigenfunctions}
\end{align}
Here, we see $v_0 = n^{-1/2}1_V$ and $v_k \perp 1_V$ for $k \geq 1$.
Although the eigenfunctions $\{v_k\}$ have ambiguity regarding their signs even when the eigenvalues $\{ \rho_k \}$ are distinct, we assume that the eigenfunctions are fixed in a certain manner.
Moreover, in the present paper, we assume that a weighted graph $(V, \{\ww_{\x\y}\})$ is always connected, so that $\rho_1 > 0$.
The upper and lower bounds of the eigenvalues are well studied; for example, we have
\begin{align}
\rho_{n-1} \leq 2\max_{\x\in V}\deg(\x).
\label{eq:upper_bound}
\end{align}
The proof and other bounds of graphs are shown in \cite{chung1997spectral}.

\subsection{Graph Fourier transform}
\label{subsec:fourier}
Set $\CF[f]_k := \langle f, v_k \rangle_V \in \R$ for $f \in \CA_V$.
Then, the eigenfunction expansion is given as
\begin{align}
f = \sum_{k=0}^{n-1} \langle f, v_k \rangle_V v_k = \sum_{k=0}^{n-1} \CF[f]_k v_k.
\label{eq:fourier}
\end{align}
When the eigenvalues $\{ \rho_k \}$ are distinct, this expansion is unique.
In signal processing, the transformation $\CF\colon \CA_V \ni f \mapsto \CF[f] \in \R^{n}$ is called the \emph{graph Fourier transform} \cite{hammond2011wavelets}.
For the support function $\chi_{\x}$ on $\x\in V$, i.e., $\chi_{\x}(\y) = \delta_{\x,\y}$, we have $\CF[\chi_{\x}]_k = \langle \chi_{\x}, v_k \rangle_V = v_k(\x)$.
This calculation formally rewrites the eigenfunction decomposition of the graph Laplacian $\lap$ in \eqref{eq:eigenfunctions} as the following identity concerning the graph weight:
\begin{align}
-\ww_{\x\y} &= \langle \chi_{\x},\lap\chi_{\y}\rangle_{V} \notag \\
&= \sum_{k=0}^{n-1}\langle \CF[\chi_{\x}]_kv_k, \rho_k\CF[\chi_{\y}]_kv_k\rangle_{V} = \sum_{k=0}^{n-1}v_k(\x)\rho_kv_k(\y).
\label{eq:weight_identity}
\end{align}

\begin{exa}[Path graph]
\label{ex:path_graph}
Let us consider the \emph{path graph} $P_n$ in Figure \ref{fig:path_graph}, which is defined by the weights
\begin{align}
w_{\x\y} = \begin{cases}
1 & \text{for $\x, \y \in V$ such that $\y = \x \pm 1$}\\
0 & \text{otherwise.}
\end{cases}
\label{eq:path_weight}
\end{align}
\begin{figure}[t]
\begin{center}
\includegraphics[width=4cm]{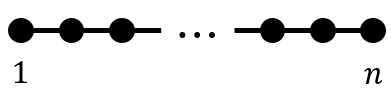}
\caption{Path graph $P_n$ in Example \ref{ex:path_graph}.}
\label{fig:path_graph}
\end{center}
\end{figure}
The eigenvalues and eigenfunctions are explicitly calculated as
\begin{align}
\rho_k = 2-2\cos \frac{\pi k}{n}, \ v_k(\x) = \sqrt{\frac{2}{n}}\cos \frac{\pi k(2\x-1)}{2n},
\label{eq:path_eigen}
\end{align}
for $1 \leq k \leq n-1$.
This is directly checked via the identity
\begin{align}
\exp \frac{\pi\i k(2\x+1)}{2n} + \exp \frac{\pi\i k(2\x-3)}{2n} = 2\cos \frac{\pi k}{n}\exp \frac{\pi\i k(2\x-1)}{2n}
\label{eq:explicit_eigen}
\end{align}
and so on.
Here, as the upper bound \eqref{eq:upper_bound}, the largest eigenvalue satisfies $\rho_{n-1} < 4$.
See also Figure \ref{fig:eigen} for $n=21$.
The corresponding graph Fourier transform is known as the \emph{discrete cosine transform} in image processing \cite{asmara2017comparison}.
\end{exa}

Similarly, the graph Fourier transform for a cycle graph is known as the \emph{discrete Fourier transform} (DFT) in signal processing and image processing \cite{asmara2017comparison}.

\subsection{Graph wave equation}
\label{subsec:graph_wave}
By use of the graph Laplacian, we can define the following \emph{graph wave equation} \cite{friedman2004wave} for $U \in \CA_V\otimes\CC^2(\R; \R)$:
\begin{align}
\frac{d^2}{dt^2} U(\x,t) = -c\lap U(\x,t), \ U(\x,0) = f(\x), \ \frac{d}{dt}\Big|_{t=0} U(\x,t) = g(\x),
\label{eq:graphwave}
\end{align}
for a certain $c > 0$ and given $f, g \in \CA_V$.
For a solution $U(\x,t)$, the graph Fourier coefficient $\CF[U(\cdot,t)]$ becomes the harmonic oscillator
\begin{align*}
\frac{d^2}{dt^2}\CF[U(\cdot,t)]_k = -c\rho_k\CF[U(\cdot,t)]_k, \ \CF[U(\cdot,0)]_k = \CF[f]_k, \ \frac{d}{dt}\Big|_{t=0} \CF[U(\cdot,t)]_k = \CF[g]_k,
\end{align*}
for $0 \leq k \leq n-1$; hence, we can describe $U(\x,t)$ as
\begin{align}
U(\x,t) = \CF[f]_0 v_0(\x) + \sum_{k=1}^{n-1} \left\{ \CF[f]_k \cos{\sqrt{c\rho_k} t} + \frac{\CF[g]_k}{\sqrt{c\rho_k}} \sin{\sqrt{c\rho_k} t} \right\}v_k(\x).
\label{eq:wave_function}
\end{align}
Therefore, graph wave equation \eqref{eq:graphwave} has a unique solution if and only if $g \perp 1_V$.
For simplicity, we call this $U(\x, t)$ a \emph{graph wave function}.
In the present paper, we also assume that the Nyquist condition is satisfied:
\begin{align}
\sqrt{c\rho_k} < \pi, \text{ for any }k.
\label{eq:Nyquist}
\end{align}
This is always achieved by re-parameterizing $t$ by a higher sampling rate than $1$, which also makes $c$ smaller, if necessary.

Here, it is worth noting that the above graph wave equation is naturally regarded as a discretization of the (continuous) wave equation
\begin{align*}
\frac{\d^2}{\d t^2} W(x,t) = c\sum_{i=1}^d\frac{\d^2}{\d x_i^2}W(x,t)
\end{align*}
for $W \in \CC^2(\R^d\times\R; \R)$, which is easily checked for the path graph $P_n$.
Namely, for the weights \eqref{eq:path_weight}, the graph Laplacian gives the second-order central difference in the spatial domain
\begin{align*}
-\lap f(\x) = f(\x+1) + f(\x-1) - 2f(\x) \sim \frac{\d^2}{\d x^2}f(x).
\end{align*}
In order to see boundary conditions, we compare continuous and graph wave functions directly.
The wave equation over $[0, n]$ with the Neumann conditions is written as
\begin{align*}
\frac{\d^2}{\d t^2} W(x,t) = c\frac{\d^2}{\d x^2} W(x,t), \ \frac{\d}{\d x}\Big|_{x=0} W(x,t) = 0, \ \frac{\d}{\d x}\Big|_{x=n} W(x,t) = 0,
\end{align*}
and its solution is explicitly described with some $\{a_k, b_k\}$ by
\begin{align}
W(x,t) = a_0 + \sum_{k=1}^{\infty} \left\{ a_k \cos\frac{\sqrt{c}\pi kt}{n} + b_k \sin\frac{\sqrt{c} \pi kt}{n} \right\}\cos\frac{\pi kx}{n}.
\label{eq:cont_wave_function}
\end{align}
\begin{figure}[t]
\begin{center}
\includegraphics[width=7cm]{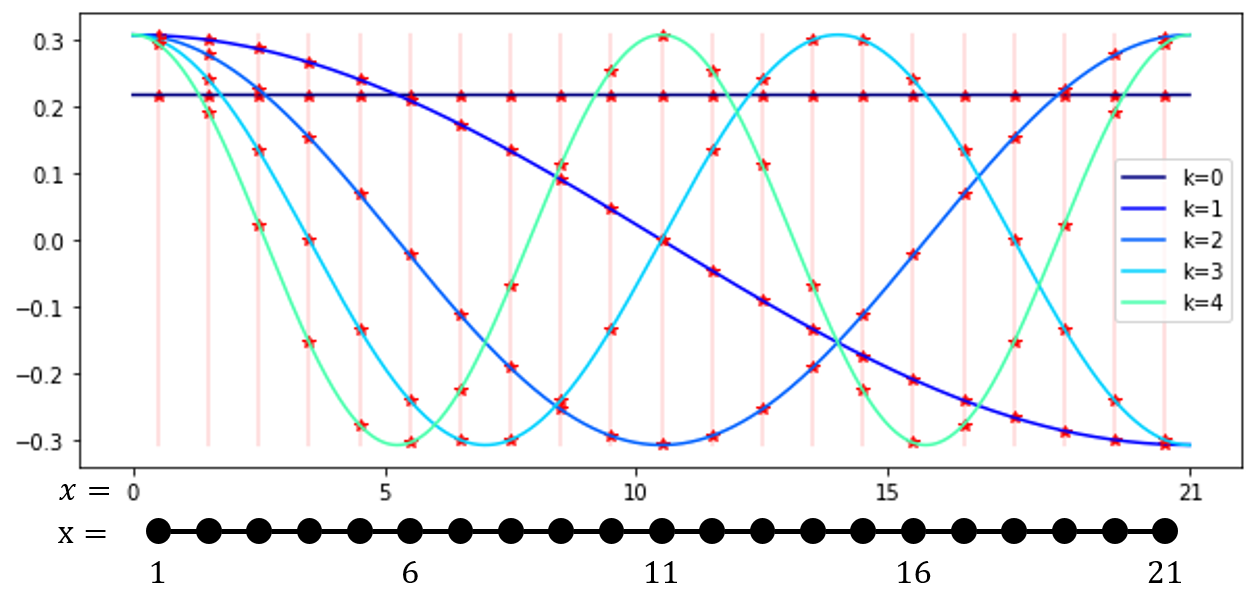}
\
\includegraphics[width=7cm]{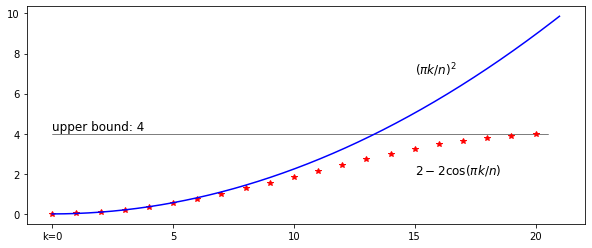}
\caption{For $n=21$, eigenfunctions and eigenvalues are plotted.
[left] Correspondence among five graph eigenfunctions (red dots) and the five continuous eigenfunctions (blue curves).
[right] Correspondence among the graph eigenvalues (red dots) and the square of the continuous frequencies (blue curves).}
\label{fig:eigen}
\end{center}
\end{figure}
This is viewed as a continuous analogue of the solution \eqref{eq:wave_function} for the path graph $P_n$, because the eigenfunctions $\{\cos \frac{\pi k(2\x-1)}{2n}\}$ in \eqref{eq:path_eigen} are obtained from $\{\cos\frac{\pi kx}{n}\}$ by discretization, and the eigenvalues $\{\rho_k\}$ in \eqref{eq:path_eigen} are computed by
\begin{align*}
\sqrt{\rho_k} = \sqrt{2-2\cos\frac{\pi k}{n}} = 2\sin \frac{\pi k}{2n} \sim \frac{\pi k}{n}
\end{align*}
for $k \ll n$.
These correspondences are illustrated in Figure \ref{fig:eigen}.
\begin{rem}
Even when a continuous wave function $W(x,t)$ in \eqref{eq:cont_wave_function} is evaluated at discrete points, like $x = 1/2n, 3/2n, \cdots, (2n-1)/2n$, the points generally do not satisfy the graph wave equation \eqref{eq:graphwave} over a path graph.
We see this point at \S \ref{subsec:cont_wave_demo}.
\end{rem}

\subsection{Graph recovery from graph wave function}
\label{subsec:graph_recovery}
The target in the present paper is to give a procedure to recover the graph weights $\{\ww_{\x\y}\}$ from finite observed values at integer points $\{ U(\cdot, t) \in \CA_V \mid t = 1, 2, \cdots, T \}$.
To state our main theorem, we simplify the representation of a graph wave function $U$ in \eqref{eq:wave_function} as
\begin{align}
U(\x,t) = \beta_0 v_0(\x) + \sum_{k=1}^{n-1} \left\{ \beta_k z_k^t + \bar{\beta}_k \bar{z}_k^{t}\right\}v_k(\x)
\label{eq:simple_wave}
\end{align}
where $\beta_0 \in \R$, $\beta_k \in \C$, and $z_k=e^{\i\sqrt{c\rho_k}}$.
\begin{thm}
\label{thm:weight_recover}
\emph{(i)} Assume that the modes $\{z_k\}$ in \eqref{eq:simple_wave} are distinct and unknown.
If $\beta_k\neq 0$ for all $k$, then the modes $\{z_k\}$, the eigenfunctions $\{v_k\}$ and the amplitudes $\{\beta_k\}$ are determined from observed valued at $2n$ integer points $\{ U(\cdot,t) \in \CA_V \mid t = 1, 2, \cdots, 2n \}$.

\emph{(ii)} The graph Laplacian $\lap$ and the graph weights $\{\ww_{\x\y}\}$ are recovered from the given modes $\{z_k\}$ and the given eigenfunctions $\{v_k\}$ up to constant $c$.
\end{thm}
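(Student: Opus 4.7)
The plan is to recognize that the simplified graph wave function \eqref{eq:simple_wave} is an instance of the stationary-mode model \eqref{eq:real_const_model}, so the results of \S \ref{sec:mode_extract} apply almost verbatim for part (i). Concretely I would take $N = n-1$ and identify $\alpha_0 = \beta_0 v_0$ (which is real-valued, since $\beta_0 = \CF[f]_0 \in \R$) and $\alpha_k = \beta_k v_k$ for $1 \leq k \leq n-1$. Because the graph is connected we have $\rho_k > 0$ for $k \geq 1$, so the Nyquist condition \eqref{eq:Nyquist} forces each $z_k = e^{\i\sqrt{c\rho_k}}$ to satisfy $\Im z_k > 0$ and $|z_k| = 1$, matching the structural hypotheses on the mode set in \eqref{eq:real_const_model}. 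The first substantive check is the non-degenerate condition \eqref{eq:non-degenerate2}: because $\beta_k \neq 0$ for every $k$ and $\{v_0,\ldots,v_{n-1}\}$ is an orthonormal basis of $\CA_V$, the span of $\{\alpha_0,\ldots,\alpha_{n-1}\}$ equals $\CA_V$, which has dimension $n = \min(N+1,n)$.

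With these hypotheses verified, Proposition \ref{prop:real_const_prony} applies with $L = \max(N-n,0)+1 = 1$, determining the modes $\{z_k\}$ from exactly $L + 2N + 1 = 2n$ observations. Theorem \ref{thm:mode_extraction}(ii), together with the amplitude formula \eqref{eq:real_const_amp}, then determines each product $\alpha_k = \beta_k v_k$ from the same data. To split $\alpha_k$ into the complex scalar $\beta_k$ and the real unit-norm eigenfunction $v_k$, I would use the orthonormality $\langle v_k,v_k\rangle_V = 1$, which gives $|\beta_k| = \|\alpha_k\|_V$. Since $v_k$ is real-valued, every coordinate of $\alpha_k$ has a common argument modulo $\pi$, so fixing a global sign of $v_k$ (the convention the paper already assumes for its eigenfunctions) simultaneously fixes the phase of $\beta_k$.

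For part (ii) the argument is purely algebraic. The Nyquist condition places $\sqrt{c\rho_k} \in [0,\pi)$, so $c\rho_k = (\Arg z_k)^2$ is uniquely recovered for each $k$. Feeding the pairs $(v_k, c\rho_k)$ into the weight identity \eqref{eq:weight_identity} yields
\begin{align*}
-c\,\ww_{\x\y} = \sum_{k=0}^{n-1} v_k(\x)\,(c\rho_k)\,v_k(\y),
\end{align*}
which determines $c\,\ww_{\x\y}$ for every pair $\x,\y \in V$, and the operator $c\lap$ is recovered analogously via \eqref{eq:laplacian}. Note that the scalar $c$ is genuinely unrecoverable from these data, because only the product $c\lap$ appears in \eqref{eq:graphwave}, so $(c,\ww_{\x\y})$ and $(\lambda c,\lambda^{-1}\ww_{\x\y})$ generate the same graph wave function; the ``up to constant $c$'' qualifier in the statement is therefore sharp.

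The main obstacle I anticipate is the amplitude-splitting step, since one must disentangle a complex scalar from a real eigenfunction given only their product on $V$. This turns out to be harmless: the orthonormality of $\{v_k\}$ fixes magnitudes, the realness of $v_k$ reduces phase ambiguity to a global sign, and the paper's standing sign convention resolves the remainder. Everything else in both parts is a specialization of already-proved results, so the bulk of the work is the structural matching with \eqref{eq:real_const_model} in the opening paragraph above.
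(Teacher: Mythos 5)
Your proposal is correct and follows essentially the same route as the paper: reduce \eqref{eq:simple_wave} to the stationary model \eqref{eq:real_const_model} with $N=n-1$ and $\alpha_k=\beta_k v_k$, verify the non-degeneracy \eqref{eq:non-degenerate2} from $\beta_k\neq 0$ and orthonormality, invoke Proposition \ref{prop:real_const_prony} and Theorem \ref{thm:mode_extraction}(ii) to get modes and amplitudes from $2n$ observations, split off $v_k$ via $\|v_k\|=1$, and recover the weights from the Nyquist condition together with identity \eqref{eq:weight_identity}. Your treatment of the phase/sign splitting of $\alpha_k=\beta_k v_k$ and of the unrecoverability of $c$ is slightly more explicit than the paper's, but the argument is the same.
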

\begin{proof}
The modes and amplitudes are determined by Theorem \ref{thm:mode_extraction}, because \eqref{eq:simple_wave} is a special case of the time-series model \eqref{eq:real_const_model} for $N=n-1$ and $\alpha_k = \beta_kv_k$, where non-degenerate condition \eqref{eq:non-degenerate2} is satisfied.
Then, from amplitudes $\{\beta_kv_k\}$, the eigenfunctions $\{v_k\}$ are determined by $\|v_k\|=1$, as asserted in (i).

For (ii), note that Nyquist condition \eqref{eq:Nyquist} distinguishes $z_k$ and $\bar{z}_k$ by the signs of their imaginary parts.
Then, $\sqrt{c\rho_k} = \arg z_k$ is defined uniquely.
Hence, the assertion follows from the identity \eqref{eq:weight_identity}.
\end{proof}
In theory, the underlying graph is determined by Theorem \ref{thm:weight_recover}.
On the other hand, in numerical computation, it is often difficult to compute the underlying graph exactly because of numerical error, although we do not assume any noise.
Therefore, in the following section, we summarize all of the above procedures as an algorithm.

\section{Numerical experiments}
\label{sec:experiment}
In this section, we summarize the previous arguments into an algorithm, and then determine its performance using numerical datasets.
In \S \ref{subsec:algorithm}, we provide a six-step algorithm originating from Theorems \ref{thm:mode_extraction} and \ref{thm:weight_recover}.
Then, in \S \ref{subsec:graph_wave_demo}, \S \ref{subsec:cont_wave_demo}, and \S \ref{subsec:pose_model}, we apply the algorithm to graph wave signals, continuous wave signals, and human joint sensor time-series data, respectively.

\subsection{Graph recover algorithm}
\label{subsec:algorithm}
For given multivariate wave signals, the number $n$ of variables is obviously known.
Then, $N = n-1$, $L = 1$, and $T = 2n$ in Proposition \ref{prop:real_const_prony} and \eqref{eq:real_const_amp} should be optimal parameters to recover the underlying graph as the proof of Theorem \ref{thm:weight_recover}.
In addition, each calculated amplitude $\alpha_k\in \CA_V\otimes\C$ is supposed to be decomposed into $\beta_k \in \C$ and $v_k \in \CA_V$ as $\alpha_k = \beta_kv_k$.
However, in practice, numerical error causes unexpected results in spite of Theorem \ref{thm:weight_recover}.
Therefore, we propose the following algorithm.
\begin{algo}
\label{algo:algorithm}
Execute the following procedure:
\begin{description}
\item[Step 1] Set $N, L \in \Z_{> 0}$ such that $nL \geq N$, and take $n(L+2N+1)$ real values $\{F_{\x, t} \in \R \mid 1 \leq \x \leq n, t = 1, 2, \cdots, L+2N+1\}$.
\item[Step 2] Define $G(\x,t;j)$ as in Lemma \ref{lem:real_const_dmd}, and get $\{\hat{d}_j\in\R \mid 1\leq j \leq N\}$ by solving linear equation \eqref{eq:real_const_dmd_matrix}.
\item[Step 3] Obtain $\{\hat{z}_j \in\C \mid 1 \leq j \leq 2N\}$ by solving the polynomial equation
\begin{align}
Z^{2N} - \hat{d}_{1}Z^{2N-1} - \cdots - \hat{d}_{N-1}Z^{N+1} - \hat{d}_NZ^{N} - \hat{d}_{N-1}Z^{N-1} \cdots - \hat{d}_{1}Z +1 =0
\label{eq:real_polynomial}
\end{align}
with the above $\{\hat{d}_j\}$, and compute $\hat{\theta}_j := \arg \hat{z}_j$ so that $-\pi < \hat{\theta}_j < \pi$ for $1 \leq j \leq 2N$.
Moreover, set $\hat{z}_0 := 1$ and $\hat{\theta}_0 := 0$.
\item[Step 4] Get $\{\hat{\alpha}_j\in\CA_V\otimes\C\mid 0 \leq j \leq 2N\}$ by solving linear equation \eqref{eq:amp} for $T=L+2N+1$ with the above $\{\hat{z}_j\}$.
\item[Step 5] If a reconstructed $\hat{F}(\x,t) :=\sum_{j=0}^{2N}\hat{\alpha}_j(\x)\hat{z}_j^t$ is far from a validation dataset, then go back to Step 1 and retake larger $N$ and $L$. 
\item[Step 6] If $\|\hat{\alpha}_j\|$ is tiny, then set $\hat{v}_j := 0_V$, otherwise $\hat{v}_j := \hat{\alpha}_j/\|\hat{\alpha}_j\|$.
Then, define
\begin{align}
\hat{\ww}_{\x\y} := -\frac{1}{2}\sum_{j=1}^{2N}\Re\left(\hat{v}_j(\x)\hat{\theta}_j^2\bar{\hat{v}}_j(\y)\right)
\label{eq:weight_estimate}
\end{align}
for any $\x\neq\y \in V$ from the above $\{\hat{\theta}_j\}$.
Optionally, we can retake this weight as $\max(\hat{\ww}_{\x\y},0)$ instead if the graph weights are assumed to be non-negative.
\end{description}
\end{algo}
Here, we present some remarks on this algorithm.
For the graph recovery in Theorem \ref{thm:weight_recover}, $N$ is supposed to be larger than $n-1$ as in examples \S \ref{subsec:graph_wave_demo} and \S \ref{subsec:cont_wave_demo}.
On the other hand, for a practical application, such as \S \ref{subsec:pose_model}, smaller $N$ may work better to extract a significant graph.
The condition $nL \geq N$ means linear equation  \eqref{eq:real_const_dmd_matrix} is an overdetermined system.
Hence, $\{\hat{d}_j\in\R\mid 1\leq j \leq N\}$ are computed by the least squares method or the pseudo-inverse matrix of the rectangular Vandermonde matrix.
Then, modes $\{\hat{z}_j\}$ are calculated as solutions of polynomial equation \eqref{eq:real_polynomial}.
For example, this part is practically computed as eigenvalues of the companion matrix of coefficients $\{\hat{d}_j\}$.
As an expression of graph wave function \eqref{eq:wave_function}, $\hat{\theta}_j^2 = (\arg\hat{z}_j)^2$ corresponds to the eigenvalue $\hat{\rho}_j$.
If constant $c$ in graph wave equation \eqref{eq:graphwave} is known, then $\hat{\theta}$ can be rescaled.
Since the absolute value of $\hat{z}_j$ is not always $1$ as explained in Remark \ref{rem:stationary_mode}, $\hat{z}_j$ is sometimes retaken as $\hat{z}'_j := \hat{z}_j/|\hat{z}_j|$.
In either case, for $\hat{z}_j$, we have $\hat{z}_l = \bar{\hat{z}}_j$ for a certain $l$.
Then linear equation \eqref{eq:amp} becomes \eqref{eq:real_const_amp}.
Hence, by Lemma \ref{lem:complex_amp}, given values $\{F_{\x,t}\}$ reconstruct a function $\hat{F}$ as
\begin{align}
\hat{F}(\x,t) := \sum_{j=0}^{2N}\hat{\alpha}_j(\x)\hat{z}_j^t = \hat{\alpha}_0(\x) + \frac{1}{2}\sum_{j=1}^{2N}\{\hat{\alpha}_j(\x)z_j^t + \bar{\hat{\alpha}}_j(\x)\bar{z}_j^t\}
\label{eq:reconstruct}
\end{align}
for any $t \in \Z$.
In an ideal case, $\hat{\alpha}_j$ is decomposed as $\hat{\beta}_j \hat{v}_j$ by $\hat{\beta}_j \in \C$, and then the weights are estimated from identity \eqref{eq:weight_identity}.
Instead, we set $\hat{\alpha}_j =: \|\hat{\alpha}_j\|\hat{v}_j$ so that $\| \hat{v}_j \|=1$.
Since $\hat{v}_j$ is in $\CA_V\otimes\C$ in general, we replace $v_j(\x)\rho_j{v}_j(\y)$ by $\Re \hat{v}_j(\x)\hat{\rho}_j\bar{\hat{v}}_j(\y)$ as in \eqref{eq:weight_estimate}, in order to eliminate the ambiguity of an unit complex value on $\hat{v}_j$.
Basically, this $\Re \hat{v}_j(\x)\hat{\rho}_j\bar{\hat{v}}_j(\y)$ depends only on $\hat{\alpha}_j(\x)$ and $\hat{\alpha}_j(\y)$ up to the constant $\|\hat{\alpha}_j\|$; thus, partial observed values along $V$ are supposed to give a subgraph.
We check this in \S \ref{subsec:graph_wave_demo}.

Furthermore, we consider the meaning of the weight \eqref{eq:weight_estimate}.
\begin{figure}[t]
\begin{center}
\includegraphics[height=4cm]{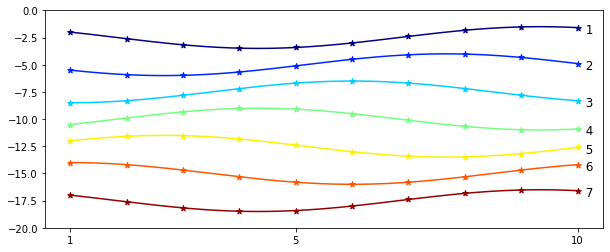}
\ 
\includegraphics[height=4cm]{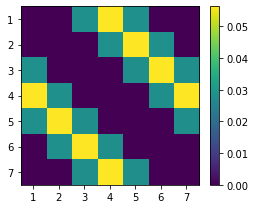}
\caption{[left] Sample multivariate function given in \eqref{eq:weight_sample}.
[right] Estimated weights for the sample.}
\label{fig:weight_property}
\end{center}
\end{figure}
Since $\hat{\theta}_0=0$, the constant term $\hat{\alpha}_0$ in \eqref{eq:reconstruct} does not affect the weights $\{\hat{\ww}_{\x\y}\}$.
The other term $\hat{\alpha}_j(\x)$ indicates the existence of the mode $z_j$ at node $\x$.
Hence, when $F_{\x}$ and $F_{\y}$ have no common mode, they have no connection, i.e., $\hat{\ww}_{\x\y}=0$.
Moreover, when they are synchronized, $\hat{\ww}_{\x\y}$ becomes negative, and when they are anti-synchronized, $\hat{\ww}_{\x\y}$ becomes positive.
To examine this property, let us consider the following example:
\begin{align}
F(\x,t) &= -\frac{5}{2}\x + \cos\left(\frac{\pi}{3}\x + \frac{\pi}{5}t \right) \notag \\
&= -\frac{5}{2}\x + \frac{1}{2}\exp\left(\i\frac{\pi}{3}\x\right)\exp\left(\i\frac{\pi}{5}t \right) + \frac{1}{2}\exp\left(-\i\frac{\pi}{3}\x\right)\exp\left(-\i\frac{\pi}{5}t \right) 
\label{eq:weight_sample}
\end{align}
for $1 \leq \x \leq 7$.
We have $\hat{\theta}_1 = \pi/5$ and $\hat{v}_1(\x) = \exp(\i\pi\x/3)/\sqrt{7}$, and then positivized weights $\{\max(\hat{\ww}_{\x\y},0)\}$ are calculated as in Figure \ref{fig:weight_property}.
There, the maximum value of the weights is $(\pi/5)^2/7 \fallingdotseq 0.0564$.

\subsection{Graph wave over an interval}
\label{subsec:graph_wave_demo}
We hereinafter demonstrate Algorithm \ref{algo:algorithm} in order to verify its effectiveness.
The first example is a graph wave function on an interval, as explained in \S \ref{subsec:graph_wave}.
For $n=21$ and $V = \{ 1 \leq \x \leq 21\}$, let us take an initial function $f$ in the graph wave equation \eqref{eq:graphwave} over the path graph $P_{21}$ as
\begin{align}
\tilde{f}(\x) = \frac{\x-11}{5} + \text{(small random value)}, \text{ for }\x \in V
\label{eq:wave_init}
\end{align}
and $\tilde{g} = 0_V$.
\begin{figure}[t]
\begin{center}
\includegraphics[width=7cm]{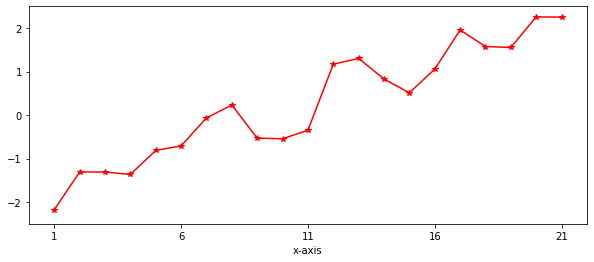}
\ 
\includegraphics[width=7cm]{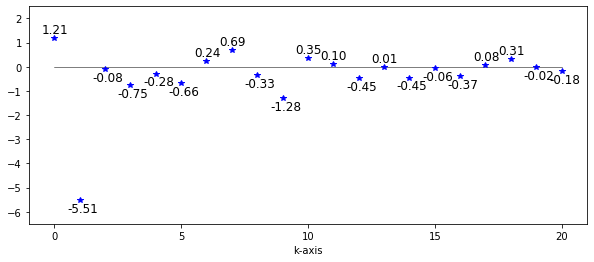}
\caption{[left] Initial function given in \eqref{eq:wave_init} for the graph wave equation.
[right] Corresponding graph Fourier transform.}
\label{fig:graph_wave_init}
\end{center}
\end{figure}

Note that $\{\CF[\tilde{f}]_k\}$ satisfy the assumption in Theorem \ref{thm:weight_recover}, as in Figure \ref{fig:graph_wave_init}.
Moreover, we take $\sqrt{c}=1.5$ so that even the largest eigenvalue satisfies Nyquist condition \eqref{eq:Nyquist}:
\begin{align*}
\sqrt{c\rho_k} < 1.5\times 2 < \pi,
\end{align*}
where upper bound \eqref{eq:upper_bound} is used.
Then, by applying the expression \eqref{eq:wave_function} to these conditions, we can see that the wave behavior becomes like the water in a bottle as in Figure \ref{fig:graph_wave}.
Although the function is continuous with respect to time variable $t$ as on the left-hand side, only integer points are used in the algorithm shown as the heat map on the right-hand side.
\begin{figure}[t]
\begin{tabular}{cc}
\begin{minipage}[c]{0.45\hsize}
\begin{center}
\includegraphics[height=5cm]{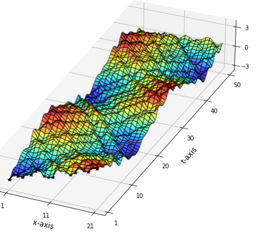}
\end{center}
\end{minipage}
&
\begin{minipage}[c]{0.45\hsize}
\begin{center} 
\includegraphics[height=3cm]{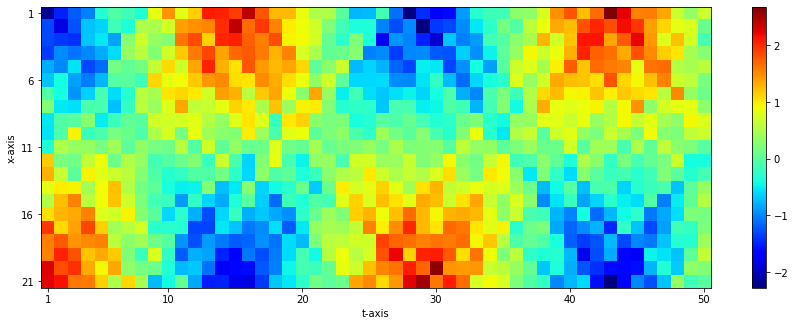}
\end{center}
\end{minipage}
\end{tabular}
\caption{[left] Graph wave function given from initial conditions \eqref{eq:wave_init} and $\tilde{g}=0_V$ over the path graph $P_{21}$.
[right] Heatmap representation by discretization.}
\label{fig:graph_wave}
\end{figure}

To this dataset, we apply Algorithm \ref{algo:algorithm}.
\begin{figure}[tp]
\begin{center}
\includegraphics[height=4.5cm]{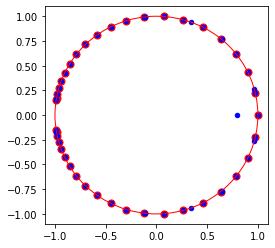}
\ 
\includegraphics[height=4.5cm]{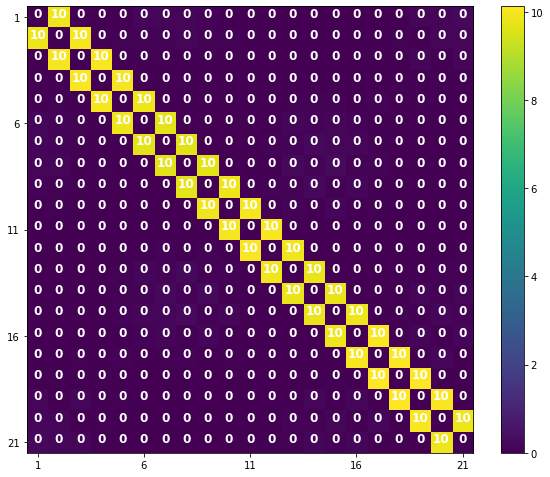}
\caption{[left] Forty-one true modes (red) and 53 estimated modes (blue).
[right] Recovered graph structure (10x), which is supposed to be the path graph $P_{21}$ in \eqref{eq:path_weight}.}
\label{fig:graph_recover}
\end{center}
\end{figure}
First, let us take $L = 3$ and $N = 26$ in Step 1.
Then, we obtain $53$ modes $\{\hat{z}_j\}$ from $56$ observation points along the time axis from Steps 2 and 3, as on the left-hand side in Figure \ref{fig:graph_recover}.
Note that although $53$ is larger than $41$ (the number of actual modes), such redundancy gives better results in numerical computation.
Then, we obtain the corresponding amplitudes $\{\hat{\alpha}_j\}$ and weights $\{\hat{\ww}_{\x\y}\}$ from Steps 4 and 6.
Since $c$ is given as above, $\hat{\theta}_j = \arg \hat{z}_j/\sqrt{c}$ can be used to calculate weights instead of $\hat{\theta}_j$ in Step 6.
On the right-hand side in Figure \ref{fig:graph_recover}, we multiply the recovered weights by $10$ for visualization, which is quite close to the original weights in the path graph \eqref{eq:path_weight}.
Finally, we check the mean squared error in each $t$ as in Figure \ref{fig:graph_recover_mse} to verify Step 5.
\begin{figure}[tp]
\begin{center}
\includegraphics[width=7cm]{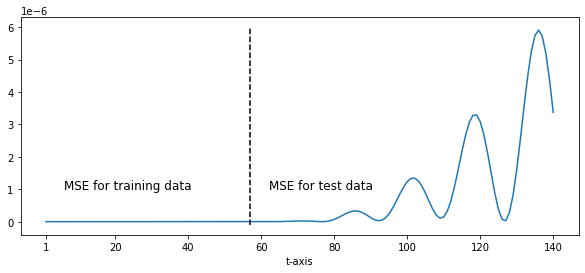}
\caption{Mean squared error between exact data and reconstructed data.}
\label{fig:graph_recover_mse}
\end{center}
\end{figure}
Since the error is tiny, even for test data, we conclude that the proposed graph recovery algorithm works well for this exact graph wave setting.

By modifying this dataset, let us consider another case when only partial observed values with respect to spatial variables are available.
\begin{figure}[tp]
\begin{center}
\includegraphics[height=3cm]{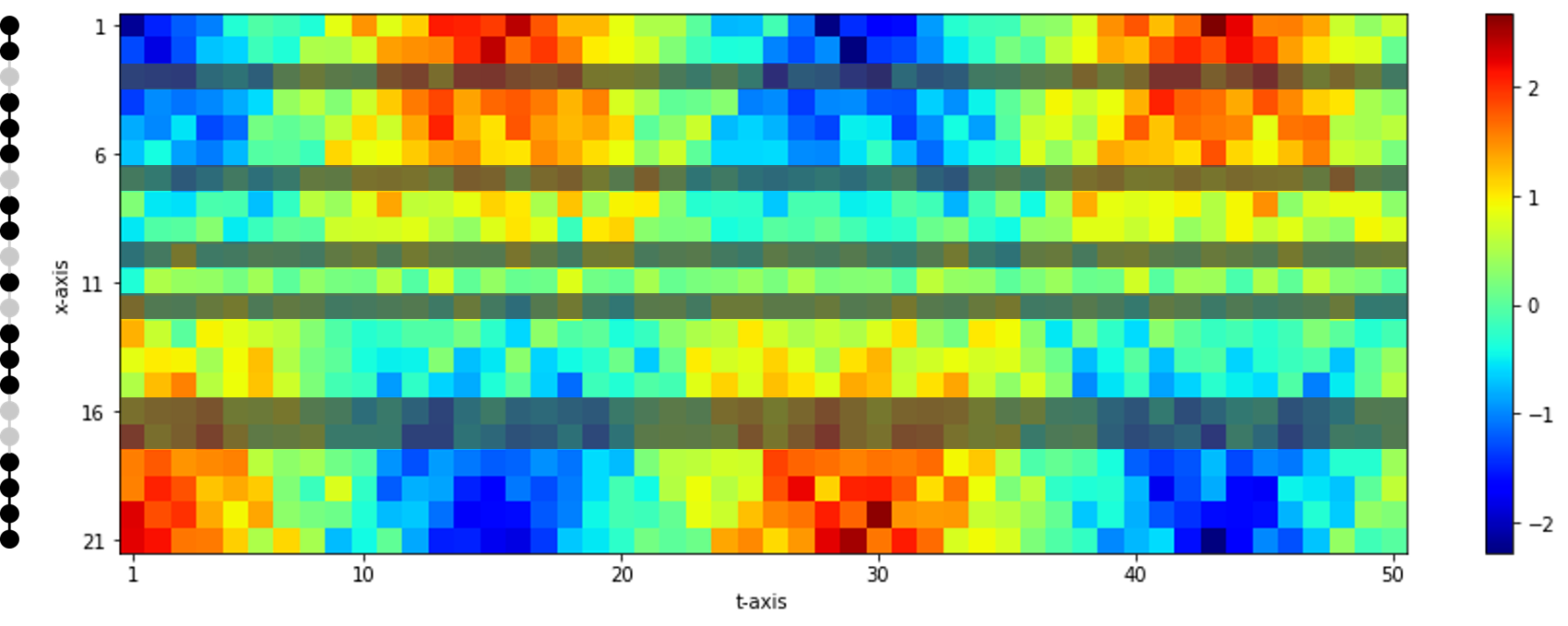}
\caption{Six observation points along the $x$-axis are missing from Figure \ref{fig:graph_wave}.
The expected graph on 15 black points is shown on the left-hand side.}
\label{fig:graph_missing}
\end{center}
\end{figure}
For $V = \{1\leq \x \leq 21\}$, assume the values on $\x = 3, 7, 10, 12, 16$, and $17$ are missing.
Then, a recovered graph is supposed to be the subgraph of the path graph in Figure \ref{fig:graph_missing}.

To this dataset, we apply the same procedure as above, namely, take $L = 3$ and $N = 26$, and then obtain $53$ modes $\{\hat{z}_j\}$.
As the result on the left-hand side of Figure \ref{fig:graph_recover_missing} shows, the extracted modes are basically the same as in the previous case, i.e., Figure \ref{fig:graph_recover}.
Then, the corresponding amplitudes $\{\hat{\alpha}_j\}$ and weights $\{\hat{\ww}_{\x\y}\}$ are also estimated, as on the right-hand side in Figure \ref{fig:graph_recover_missing}, which is close to what we expected in Figure \ref{fig:graph_missing}.
\begin{figure}[tp]
\begin{center}
\includegraphics[height=4.5cm]{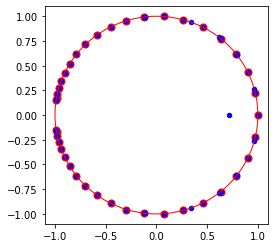}
\ 
\includegraphics[height=4.5cm]{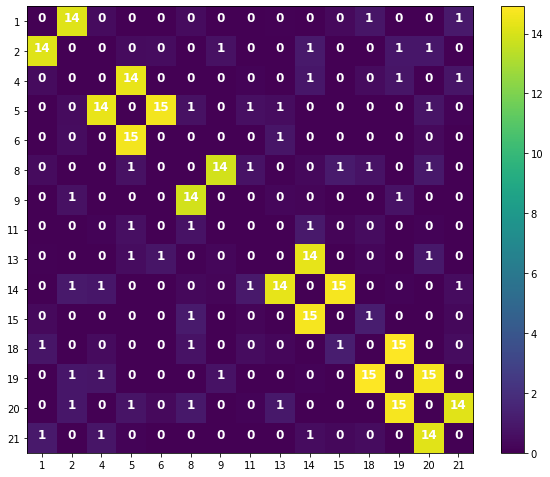}
\caption{[left] Forty-one true modes (red) and 53 estimated modes (blue).
[right] Recovered graph structure (10x), which is same as the subgraph of the path graph in \eqref{fig:graph_missing} up to a constant multiplication.}
\end{center}
\label{fig:graph_recover_missing}
\end{figure}

These examples show that Algorithm \ref{algo:algorithm} can correctly recover the underlying graph and subgraph from an observed graph wave function as Theorem \ref{thm:weight_recover} guarantees.
In the following subsections, we consider other cases in which multivariate signals are not explicitly assumed to have a graph structure under variables.

\subsection{Continuous wave over an interval}
\label{subsec:cont_wave_demo}

Unlike in \S \ref{subsec:graph_wave_demo}, here we consider a continuous wave function on the interval explained in \eqref{eq:cont_wave_function}.
Then, we apply Algorithm \ref{algo:algorithm} to its observed values in expectation that the algorithm extracts a graph in the shape of an interval, although they do not satisfy the graph wave equation over a path graph.
\begin{figure}[t]
\begin{tabular}{cc}
\begin{minipage}[c]{0.45\hsize}
\begin{center}
\includegraphics[width=7cm]{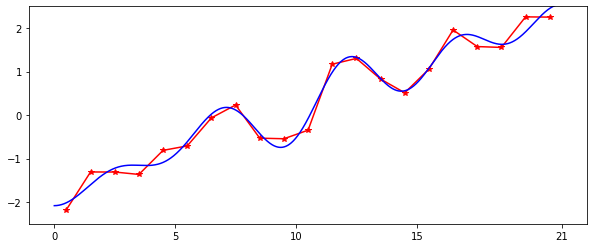}
\end{center}
\end{minipage}
&
\begin{minipage}[c]{0.45\hsize}
\begin{center} 
\includegraphics[height=5cm]{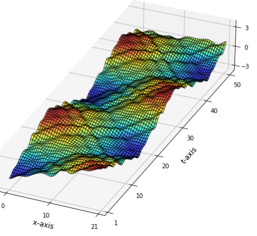}
\end{center}
\end{minipage}
\end{tabular}
\caption{[left] From $\tilde{f}$ (red), we define an initial function $\widetilde{W}(x,0)$ (blue) by taking the leading $11$ terms of the graph Fourier series.
[right] Continuous wave function given from the initial functions.}
\label{fig:cont_wave_demo}
\end{figure}

Take $n=21$ in \eqref{eq:cont_wave_function} as above.
For an initial function $W(x,0)$, let us use a smoothened function of $\tilde{f}$ in \eqref{eq:wave_init}.
By using the identities
\begin{align*}
a_0 = \int_0^nW(x,0)dx, \ a_k = 2\int_0^nW(x,0)\cos\frac{\pi k x}{n}dx \ \text{ for }k>0,
\end{align*}
we define
\begin{align*}
\tilde{a}_0 = \frac{1}{21}\sum_{\x\in V}\tilde{f}(\x), \ \tilde{a}_k = \frac{2}{21}\sum_{\x\in V}\tilde{f}(\x)\cos\frac{\pi k (2\x-1)}{42} \ \text{ for }k>0.
\end{align*}
In fact, these coefficients are nothing but the graph Fourier coefficients of $\tilde{f}$, namely, $\sqrt{21/2}\tilde{a}_k = \CF[\tilde{f}]_k$ for $1 \leq k \leq n-1$.
Then, by the taking first $11$ coefficients and assuming $\frac{\d}{\d t}|_{t=0}\widetilde{W}(x,t) = 0$, i.e., $b_k=0$ for any $k$, we obtain a continuous wave function
\begin{align}
\widetilde{W}(x,t) = \tilde{a}_0 + \sum_{k=1}^{10} \tilde{a}_k \cos\frac{\sqrt{c}\pi kt}{21}\cos\frac{\pi kx}{21}
\label{eq:cont_wave_function_demo}
\end{align}
for $\sqrt{c}=1.5$, which is shown in Figure \ref{fig:cont_wave_demo}.
Here, we emphasize that this wave function is continuously defined for $x \in [0, 21]$ and $t \in \R$, and thus there exist various ways to discretize it by evaluation.
Moreover, the number $11$ is taken so that each frequency $\sqrt{c}\pi k/21$ satisfies Nyquist condition \eqref{eq:Nyquist}, although this has another purpose, as explained below.

First of all, we evaluate the above $\widetilde{W}(x,t)$ at $21$ points along the $x$-axis, $x = \x-0.5$ for $\x = 1, 2, \cdots, 21$, and integer points along the time axis.
\begin{figure}[t]
\begin{center}
\includegraphics[height=4.5cm]{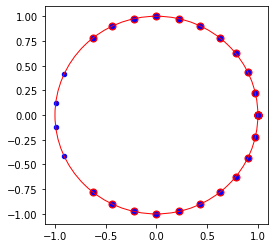}
\ 
\includegraphics[height=4.5cm]{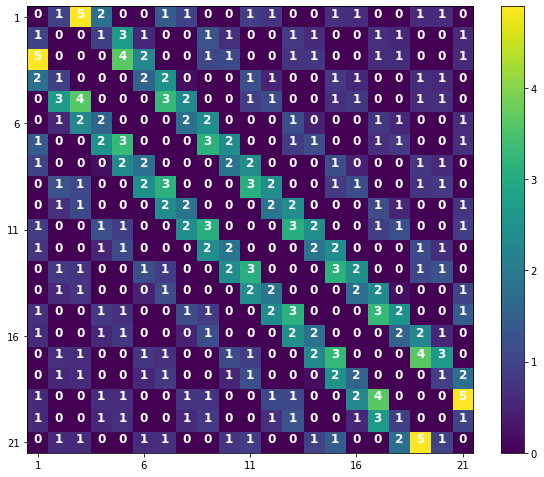}
\caption{[left] Twenty-one true modes (red) and 25 estimated modes (blue).
[right] Estimated graph structure (10x) over the 21 observation points.}
\label{fig:real_graph_recover}
\end{center}
\end{figure}
For these observed values, we take $L = 1$ and $N = 12$ in Step 1.
Then, we obtain $25$ modes $\{\hat{z}_j\}$ from $26$ observation points from Steps 2 and 3, as on the left-hand side in Figure \ref{fig:real_graph_recover}, and the corresponding amplitudes $\{\hat{\alpha}_j\}$ and weights $\{\hat{\ww}_{\x\y}\}$ from Steps 4 and 6, as on the right-hand side.
\begin{figure}[t]
\begin{center}
\includegraphics[height=3cm]{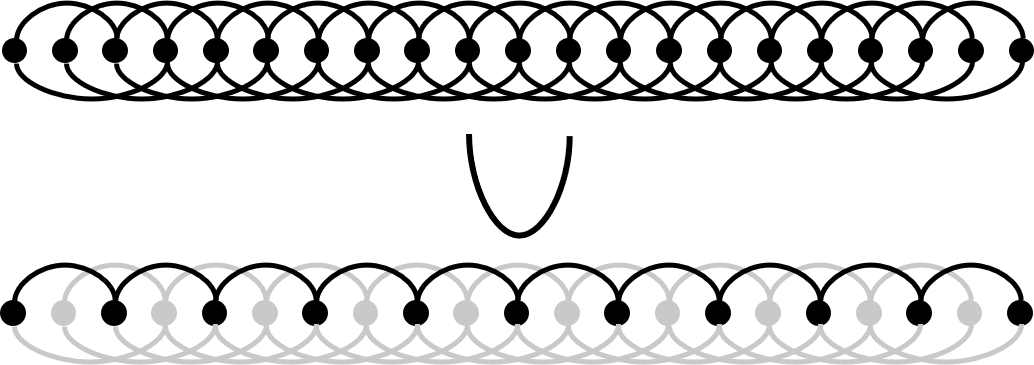}
\caption{Upper estimated graph having the lower path graph as a subgraph.}
\label{fig:estimated_graph}
\end{center}
\end{figure}

The estimated weights indicate the path-like graph at the top of Figure \ref{fig:estimated_graph}, which includes the path graph below.
Thus, the same procedure for a partial observed dataset like in Figure \ref{fig:graph_missing} is applicable to obtain the path graph.
\begin{figure}[tp]
\begin{center}
\includegraphics[height=3cm]{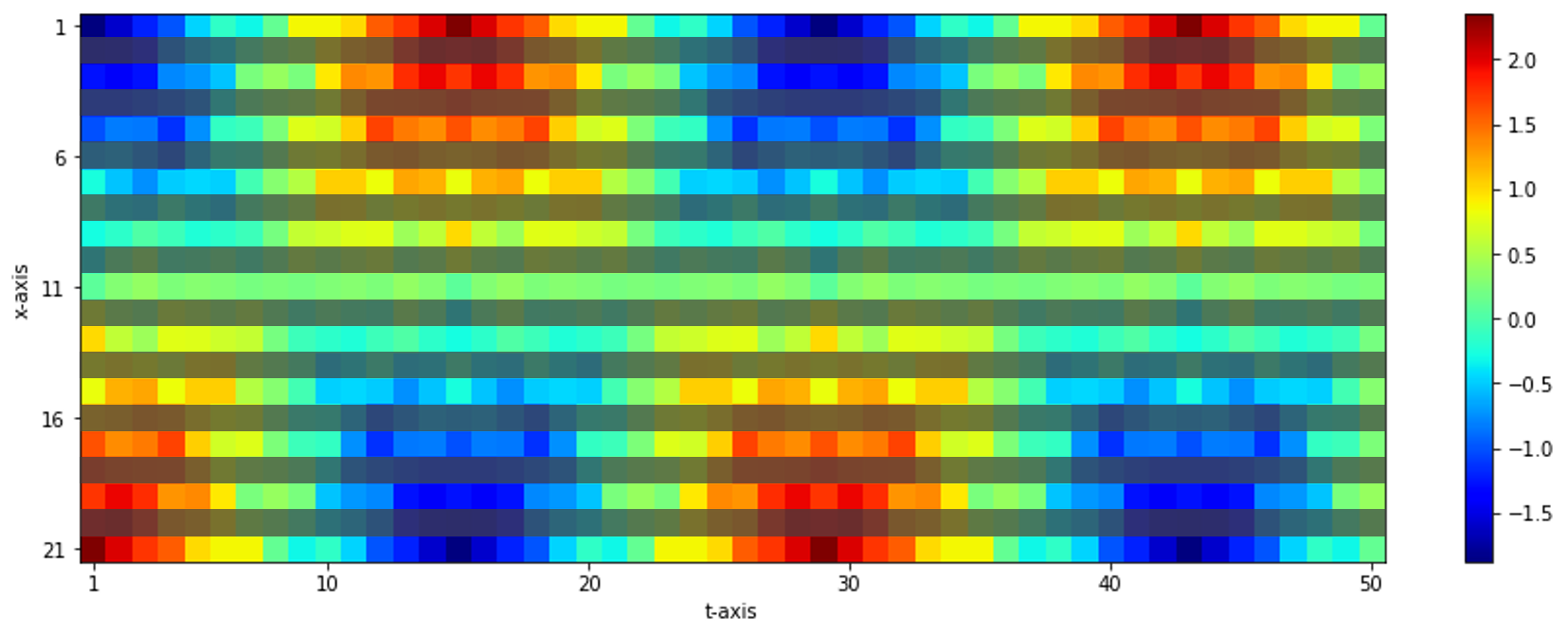}
\
\includegraphics[height=3cm]{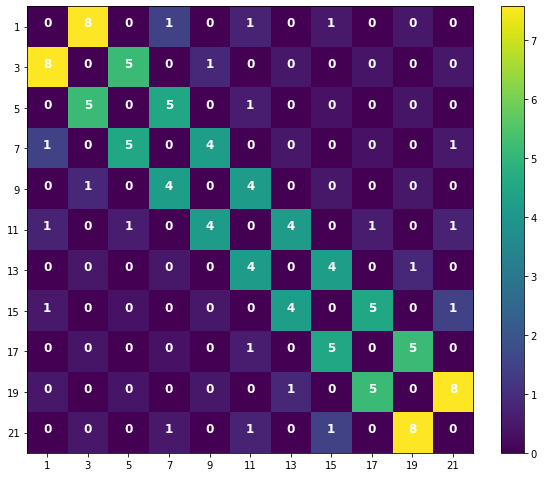}
\caption{[left] Eleven observation points reduced from the 21 points.
[right] Estimated graph structure (10x) over the 11 observation points.}
\label{fig:real_graph_recover_missing}
\end{center}
\end{figure}
Namely, we try another discretization $x = 2\x-1.5$ for $\x = 1, 2, \cdots, 11$, given on the left-hand side in Figure \ref{fig:real_graph_recover_missing}.
Then, by using the same parameter, we successfully obtain the weights close to those of the path graph on the right-hand side in Figure \ref{fig:real_graph_recover_missing} expected from the bottom path graph in Figure \ref{fig:estimated_graph}.

This phenomenon can be explained in another way.
From $21\times 26$ observed values $\{ \widetilde{W}(x,t) \in \R \mid x = 0.5, 1.5, \cdots, 20.5, t = 1, 2, \cdots, 26 \}$, a $(21,12)$ matrix formed by $\{G(\x,t;j)\}$ is constructed as in linear equation \eqref{eq:real_const_dmd_matrix} in Step 2.
\begin{figure}[tp]
\begin{center}
\includegraphics[height=3cm]{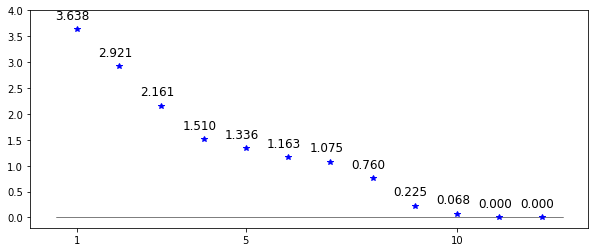}
\caption{Twelve singular values of the $(21,12)$ matrix.}
\label{fig:singular_values}
\end{center}
\end{figure}
Here, we can see that the $(21,12)$ matrix has rank $10$, which means its singular values consist of $10$ major values and $2$ minor values, as in Figure \ref{fig:singular_values}.
This is consistent with the definition of $\widetilde{W}(x,t)$, where we take the $10$ terms in the summation other than the constant term \eqref{eq:cont_wave_function_demo}.
Hence, it is sufficient to use more than $10$ observation points, such as $11\times 26$ observed values $\{ \widetilde{W}(x,t) \in \R \mid x = 0.5, 2.5, \cdots, 20.5, t = 1, 2, \cdots, 26 \}$ as above, in order to extract an outline of the underlying graph.

\subsection{Pose model}
\label{subsec:pose_model}
As a more practical case, we consider a graph estimation problem from human joint tracking data.
To this end, we use an open source dataset PKU-MMD, which consists of RGB, skeletons, depth, and IR sequential data for 51 action categories \cite{liu2017pku}.
The skeleton data we use herein are obtained by Kinect v2, the joints of which are described as in Figure \ref{fig:skeleton} \cite{kinectjoint}.
More strictly speaking, we use the 2,100th to 2,800th frames from the "0007-M" sample in the dataset.
\begin{figure}[tp]
\begin{tabular}{cc}
\begin{minipage}[c]{0.45\hsize}
\begin{center}
\includegraphics[height=5cm]{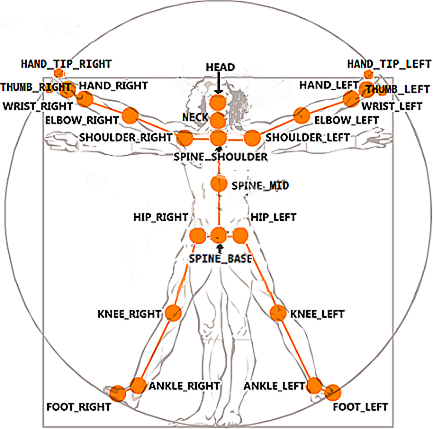}
\end{center}
\end{minipage}
&
\begin{minipage}[c]{0.45\hsize}
\begin{center}
\includegraphics[height=4.5cm]{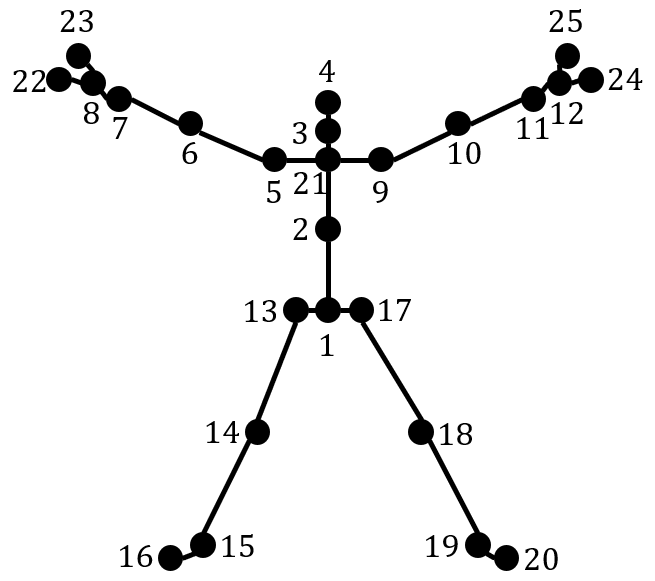}
\end{center}
\end{minipage}
\end{tabular}
\caption{[left] Meanings of 25 joints.
[right] Corresponding joint numbers.}
\label{fig:skeleton}
\end{figure}
Although the skeleton data provide the three-dimensional coordinates $x$, $y$, and $z$ of each joint location in space, we only use the $x$ and $y$ coordinates in this experiment.
Then, we try to extract relations among $25$ moving $x+\i y$ joint locations as graphs, which are supposed to characterize individual actions.
\begin{figure}[p]
\begin{center}
\includegraphics[width=15cm]{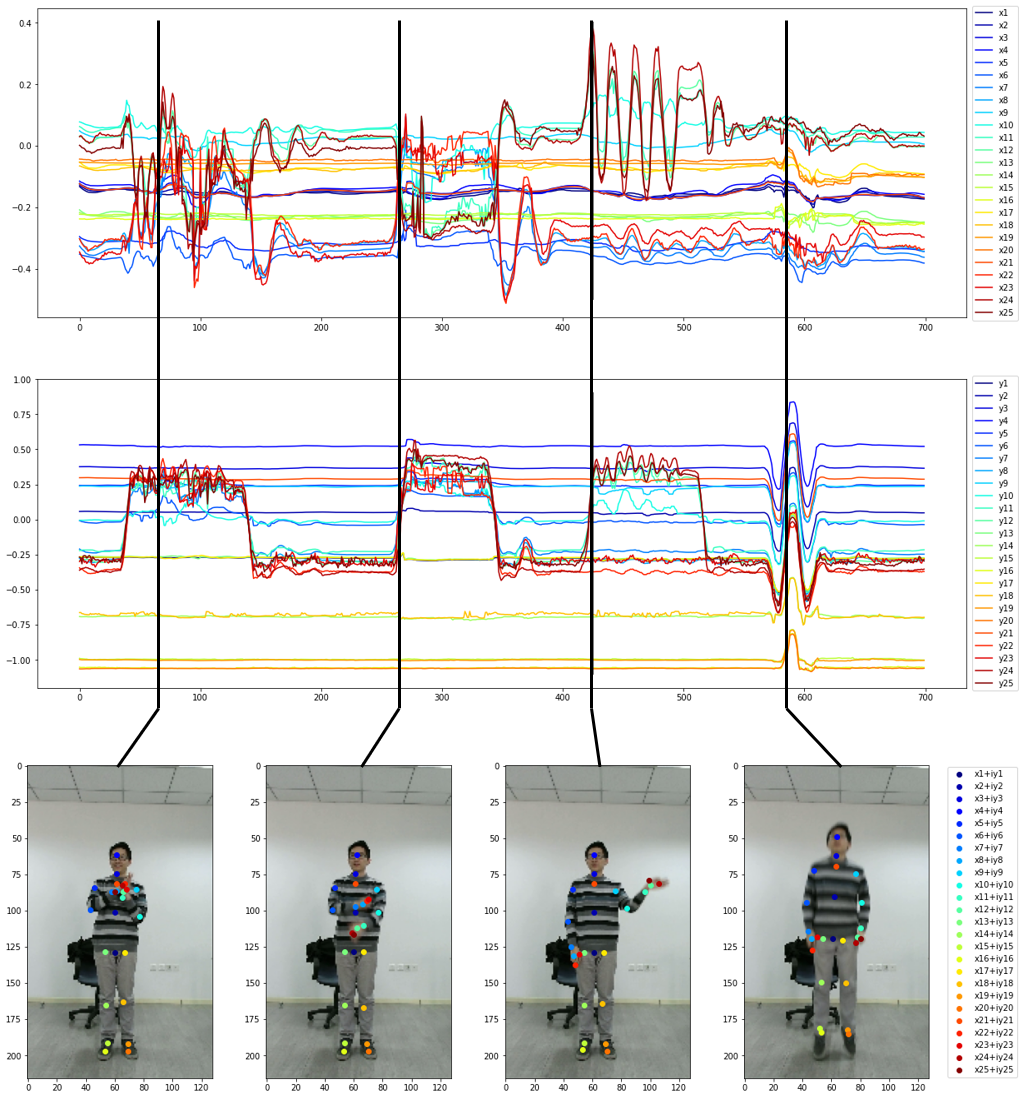}
\caption{[top] Time-series data of the 25 $x$-coordinates of human joints.
[middle] Time-series data of the 25 $y$-coordinates of human joints.
[bottom] Four actions starting from $t=65, 265, 425$, and $585$ with the 25 $x+\i y$-coordinates of human joints.}
\label{fig:data_description}
\end{center}
\end{figure}

Let $F_x(\x,t)$ and $F_y(\x,t)$ denote functions of the $x$ and $y$ locations in an RGB image, respectively, for the joint nodes $1 \leq \x \leq 25$ and the times $t = 0, 1, \cdots, 600$.
First, as $n=2\times25=50$ time series, we apply Steps 1 to 5 of Algorithm \ref{algo:algorithm} for $L=4$ and $N=5$ to a $15$-long sequence starting from the $t_1$-th frame.
Then, we obtain mode decompositions
\begin{align}
\hat{F}_x(\x,t; t_1)=\hat{\alpha}_{x,0,t_1}(\x) + \sum_{j=1}^{10}\hat{\alpha}_{x,j,t_1}(\x)\hat{z}_j^t, \ \ \hat{F}_y(\x,t; t_1)=\hat{\alpha}_{y,0,t_1}(\x) + \sum_{j=1}^{10}\hat{\alpha}_{y,j,t_1}(\x)\hat{z}_j^t
\label{eq:pose_decompose}
\end{align}
for any node $\x$ so as to coincide with $F_x(\x,t)$ and $F_y(\x,t)$ on the sequence $t_1 \leq t < t_1+15$.
See also Lemma \ref{lem:complex_amp}.
Here, we can see that $\hat{F}_x(\x,t;t_1)+\i \hat{F}_y(\x,t;t_1)$ represents the time evolution of the $x+\i y$ location in an RGB image.
Finally, we calculate \eqref{eq:weight_estimate} for $\hat{\alpha}_{j,t_1} = \hat{\alpha}_{x,j,t_1}+\i \hat{\alpha}_{y,j,t_1}$ in order to estimate weights among joint locations at $t=t_1$.
Note that although $F_x(\x,t)$ and $F_y(\x,t)$ no longer satisfy the graph wave equation for the estimated graph, we suppose that the graph weights extract meaningful features from the decomposition \eqref{eq:pose_decompose}. 

For example, we pick out four sequences representing "clapping" from the 2,165th to 2,180th frames ($t=65$), "cross hands in front" from the 2,365th to 2,380th frames ($t=265$), "hand waving" from the 2,525th to 2,540th frames ($t=425$), and "jump up" from the 2,685th to 2,700th frames ($t=585$).
The corresponding data are shown in Figure \ref{fig:data_description}.
Then, we draw the resulting graphs in Figure \ref{fig:result_graph}, where each edge indicates that the estimated weight is larger than $0.2$.
\begin{figure}[tp]
\begin{center}
\includegraphics[width=15cm]{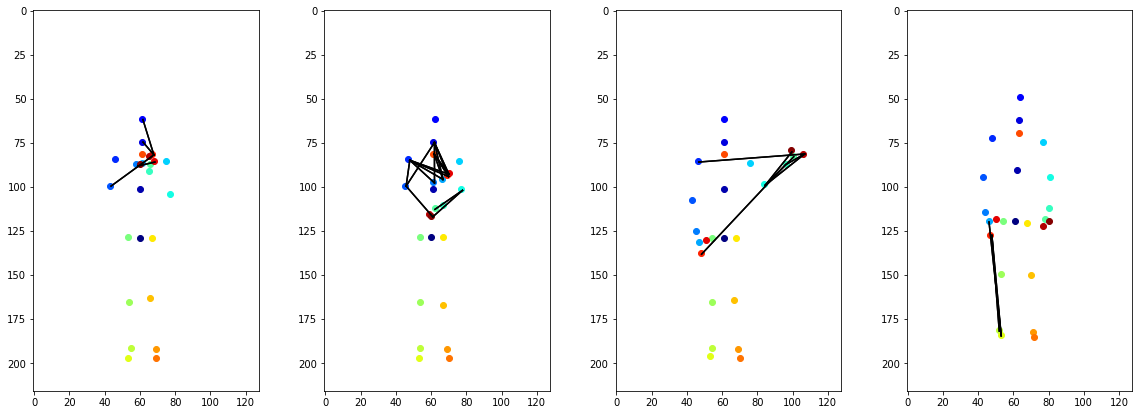}
\caption{Graphs of the results. From left to right, "clapping", "cross hands in front", "hand waving", and "jump up".}
\label{fig:result_graph}
\end{center}
\end{figure}
As we can see, edges in each case are illustrated on nodes and pairs that characterize each action.
In particular, since we are paying attention to positive weights, Algorithm \ref{algo:algorithm} ignores synchronized pairs and extracts anti-synchronized pairs.
Hence, even in the "jump up" case, only the leg motion is highlighted against all upward moving nodes.
As in this case, we expect that the proposed graph recovery algorithm will work well, even for non-wave signals.

\section{Conclusion}
In the present paper, we introduced a graph recovery method from a graph wave function based on a modified DMD algorithm.
Then, we showed its effectiveness by using three examples, i.e., a graph wave function, a continuous wave function, and human joint tracking data, from viewpoints of from theoretical demonstration to practical study.
In the present study, we do not assume the existence of noise in each time-series model.
Hence, further study is needed in order to assure that the proposed algorithm works against noises, such as white noise, with the help of known technologies in signal processing, time-series analysis, and statistics.

\section*{Acknowledgements}
The author would like to thank Yosuke Otsubo for discussing and suggesting DMD technology for further study.
The author is also grateful to Satoshi Takahashi, Tetsuya Koike, Chikara Nakamura, Bausan Yuan, Ping-Wei Chang, and Pranav Gundewar for their constant encouragement.

\small

\bibliographystyle{plain}


\begin{thebibliography}{10}

\bibitem{kinectjoint}
Kinect for windows v2 referece, jointtype enumeration.
\newblock
  \url{https://docs.microsoft.com/en-us/previous-versions/windows/kinect/dn758662(v=ieb.10)}.

\bibitem{abanda2019review}
Amaia Abanda, Usue Mori, and Jose~A Lozano.
\newblock A review on distance based time series classification.
\newblock {\em Data Mining and Knowledge Discovery}, 33(2):378--412, 2019.

\bibitem{ahmed2015dtw}
Faisal Ahmed, Padma~Polash Paul, and Marina~L Gavrilova.
\newblock Dtw-based kernel and rank-level fusion for 3d gait recognition using
  kinect.
\newblock {\em The Visual Computer}, 31(6):915--924, 2015.

\bibitem{asmara2017comparison}
Rosa~A Asmara and Reza Agustina.
\newblock Comparison of discrete cosine transforms (dct), discrete fourier
  transforms (dft), and discrete wavelet transforms (dwt) in digital image
  watermarking.
\newblock {\em International Journal of Advanced Computer Science and
  Applications}, 8(2):245--249, 2017.

\bibitem{berndt1994using}
Donald~J Berndt and James Clifford.
\newblock Using dynamic time warping to find patterns in time series.
\newblock In {\em KDD workshop}, volume~10, pages 359--370. Seattle, WA, USA:,
  1994.

\bibitem{chung1997spectral}
Fan~RK Chung and Fan~Chung Graham.
\newblock {\em Spectral graph theory}.
\newblock Number~92. American Mathematical Soc., 1997.

\bibitem{liu2017pku}
Liu Chunhui, Hu~Yueyu, Li~Yanghao, Song Sijie, and Liu Jiaying.
\newblock Pku-mmd: A large scale benchmark for continuous multi-modal human
  action understanding.
\newblock {\em arXiv preprint arXiv:1703.07475}, 2017.

\bibitem{epskamp2018gaussian}
Sacha Epskamp, Lourens~J Waldorp, Ren{\'e} M{\~o}ttus, and Denny Borsboom.
\newblock The gaussian graphical model in cross-sectional and time-series data.
\newblock {\em Multivariate behavioral research}, 53(4):453--480, 2018.

\bibitem{faloutsos1994fast}
Christos Faloutsos, Mudumbai Ranganathan, and Yannis Manolopoulos.
\newblock Fast subsequence matching in time-series databases.
\newblock {\em ACM Sigmod Record}, 23(2):419--429, 1994.

\bibitem{forsythe1957generation}
George~E Forsythe.
\newblock Generation and use of orthogonal polynomials for data-fitting with a
  digital computer.
\newblock {\em Journal of the Society for Industrial and Applied Mathematics},
  5(2):74--88, 1957.

\bibitem{friedman2004wave}
Joel Friedman and Jean-Pierre Tillich.
\newblock Wave equations for graphs and the edge-based laplacian.
\newblock {\em Pacific Journal of Mathematics}, 216(2):229--266, 2004.

\bibitem{giudici2016graphical}
Paolo Giudici and Alessandro Spelta.
\newblock Graphical network models for international financial flows.
\newblock {\em Journal of Business \& Economic Statistics}, 34(1):128--138,
  2016.

\bibitem{hammond2011wavelets}
David~K Hammond, Pierre Vandergheynst, and R{\'e}mi Gribonval.
\newblock Wavelets on graphs via spectral graph theory.
\newblock {\em Applied and Computational Harmonic Analysis}, 30(2):129--150,
  2011.

\bibitem{li2017graph}
Meng Li and Howard Leung.
\newblock Graph-based approach for 3d human skeletal action recognition.
\newblock {\em Pattern Recognition Letters}, 87:195--202, 2017.

\bibitem{li2013time}
Ta-Hsin Li.
\newblock {\em Time series with mixed spectra}.
\newblock CRC Press, 2013.

\bibitem{li1993asymptotic}
Ta-Hsin Li and Benjamin Kedem.
\newblock Asymptotic analysis of a multiple frequency estimation method.
\newblock {\em Journal of multivariate analysis}, 46(2):214--236, 1993.

\bibitem{plonka2014prony}
Gerlind Plonka and Manfred Tasche.
\newblock Prony methods for recovery of structured functions.
\newblock {\em GAMM-Mitteilungen}, 37(2):239--258, 2014.

\bibitem{popivanov2002similarity}
Ivan Popivanov and Renee~J Miller.
\newblock Similarity search over time-series data using wavelets.
\newblock In {\em Proceedings 18th international conference on data
  engineering}, pages 212--221. IEEE, 2002.

\bibitem{prony1795essai}
R~Prony.
\newblock Essai experimental et analytique.
\newblock {\em Journal de L’Ecole Polytechnique}, 1(2):24--76, 1795.

\bibitem{schmid2010dynamic}
Peter~J Schmid.
\newblock Dynamic mode decomposition of numerical and experimental data.
\newblock {\em Journal of fluid mechanics}, 656:5--28, 2010.

\bibitem{stampfer2020generalized}
Kilian Stampfer and Gerlind Plonka.
\newblock The generalized operator based prony method.
\newblock {\em Constructive Approximation}, 52(2):247--282, 2020.

\bibitem{takayama2020geometric}
Yuuya Takayama.
\newblock Geometric formulation for discrete points and its applications.
\newblock {\em arXiv preprint arXiv:2002.03767}, 2020.

\bibitem{tang2012retrieval}
Jeff~KT Tang and Howard Leung.
\newblock Retrieval of logically relevant 3d human motions by adaptive feature
  selection with graded relevance feedback.
\newblock {\em Pattern Recognition Letters}, 33(4):420--430, 2012.

\bibitem{tu2013dynamic}
Jonathan~H Tu.
\newblock {\em Dynamic mode decomposition: Theory and applications}.
\newblock PhD thesis, Princeton University, 2013.

\bibitem{uhler2017gaussian}
Caroline Uhler.
\newblock Gaussian graphical models: An algebraic and geometric perspective.
\newblock {\em arXiv preprint arXiv:1707.04345}, 2017.

\bibitem{wang2016fastggm}
Ting Wang, Zhao Ren, Ying Ding, Zhou Fang, Zhe Sun, Matthew~L MacDonald,
  Robert~A Sweet, Jieru Wang, and Wei Chen.
\newblock Fastggm: an efficient algorithm for the inference of gaussian
  graphical model in biological networks.
\newblock {\em PLoS computational biology}, 12(2):e1004755, 2016.

\end{thebibliography}

\end{document}